\documentclass[journal, onecolumn, 11pt]{IEEEtran}

\usepackage{setspace}
\onehalfspacing 
\usepackage{amsmath}
\usepackage{amssymb}
\usepackage{mathrsfs}
\usepackage{amsthm}
\usepackage{multirow}
\usepackage{color}
\usepackage{array}
\newcolumntype{C}[1]{>{\centering\arraybackslash}p{#1}}
\usepackage{url}
\usepackage{comment}
\usepackage{enumerate}
\usepackage{eucal}
\usepackage{hyperref}
\newcommand\myshade{50}

\hypersetup{
	linkcolor  = red!\myshade!black,
	citecolor  = blue!\myshade!black,
	urlcolor   = blue!\myshade!black,
	colorlinks = true,
}

\usepackage{subcaption}

\usepackage{algorithm}
\usepackage[noend]{algorithmic}
\usepackage{cite}
\usepackage[table]{xcolor}
\usepackage{blkarray}

\usepackage{stmaryrd}

\usepackage[top=1in, bottom=1in, left=0.75in, right=0.75in]{geometry}

\newcommand{\ra}[1]{\renewcommand{\arraystretch}{#1}}

\usepackage{tikz}
\usetikzlibrary{arrows, patterns, shapes.arrows, decorations.pathmorphing, matrix, calc, decorations.pathreplacing, arrows.meta,automata,positioning}


\pgfkeys{tikz/mymatrixenv/.style={decoration=brace,every left delimiter/.style={xshift=3pt},every right delimiter/.style={xshift=-3pt}}}
\pgfkeys{tikz/mymatrix/.style={matrix of math nodes,left delimiter=[,right delimiter={]},inner sep=2pt,column sep=1em,row sep=0.5em,nodes={inner sep=0pt}}}
\pgfkeys{tikz/mymatrixbrace/.style={decorate,thick}}


\IEEEoverridecommandlockouts

\theoremstyle{plain}
\newtheorem{theorem}{Theorem}
\newtheorem{corollary}[theorem]{Corollary}
\newtheorem{lemma}[theorem]{Lemma}
\newtheorem{proposition}[theorem]{Proposition}

\theoremstyle{definition}

\newtheorem{example}[theorem]{Example}

\newtheorem{remark}[theorem]{Remark}

\newcommand{\RR}{\mathbb{R}}

\newcommand{\HH}{\mathbb{H}}

\DeclareMathAlphabet{\mathbfsl}{OT1}{ppl}{b}{it} 

\newcommand{\ve}{\mathbfsl{e}}

\newcommand{\va}{\mathbfsl{a}}

\newcommand{\vq}{\mathbfsl{q}}
\newcommand{\vr}{\mathbfsl{r}}
\newcommand{\vs}{\mathbfsl{s}}

\newcommand{\vx}{\mathbfsl{x}}
\newcommand{\vy}{\mathbfsl{y}}

\newcommand{\vA}{\mathbfsl{A}}
\newcommand{\vB}{\mathbfsl{B}}
\newcommand{\vC}{\mathbfsl{C}}
\newcommand{\vD}{\mathbfsl{D}}

\newcommand{\vH}{\mathbfsl{H}}

\newcommand{\vM}{\mathbfsl{M}}
\newcommand{\vT}{\mathbfsl{T}}

\newcommand{\cS}{\mathcal{S}}

\newcommand{\cC}{\mathcal{C}}

\newcommand{\cE}{\mathcal{E}}

\newcommand{\cG}{\mathcal{G}}
\newcommand{\cH}{\mathcal{H}}
\newcommand{\cL}{\mathcal{L}}

\newcommand{\cV}{\mathcal{V}}

\newcommand{\cP}{\mathcal{P}}

\newcommand{\rateT}{{\stackrel{\sim}{\smash{T}\rule{0pt}{1.1ex}}}}

\newcommand{\capy}{{\rm Cap}}

\newcommand{\rgv}{R_{\rm GV}}
\newcommand{\rmr}{R_{\rm MR}}
\newcommand{\gvcurve}{{\cal GV}}

\newcommand{\etal}{{\em et al.}}

\newcommand{\floor}[1]{{\left\lfloor #1\right\rfloor}}

\newcommand{\interval}[1]{{\left[  #1\right]}}

\newcommand{\todo}[1]{{\color{red}(TODO: #1)}}

\title{Evaluating the Gilbert-Varshamov Bound for Constrained Systems}

\author{
	\IEEEauthorblockN{
		Keshav Goyal
		and
		Han Mao Kiah\\[2mm]
	}
	\IEEEauthorblockA{\small School of Physical and Mathematical Sciences, Nanyang Technological University, Singapore} \\[0mm]
	\IEEEauthorblockA{\small Emails: 
			keshav002@ntu.edu.sg, hmkiah@ntu.edu.sg\\[0mm]}
	 \thanks{The paper was presented in part at the 2022 IEEE International Symposium on Information Theory (ISIT) \cite{keshav2022evaluating}.}}

\begin{document}
	\date{}
	
	\maketitle
	
	\hspace*{-10pt}\begin{abstract}
	We revisit the well-known Gilbert-Varshamov (GV) bound for constrained systems. 
	In 1991, Kolesnik and Krachkovsky showed that GV bound can be determined via the solution of some optimization problem. Later, Marcus and Roth (1992) modified the optimization problem and improved the GV bound in many instances. In this work, we provide explicit numerical procedures to solve these two optimization problems and hence, compute the bounds.
	We then show the procedures can be further simplified when we plot the respective curves.
	In the case where the graph presentation comprise a single state, we provide explicit formulas for both bounds.
	\end{abstract}
	
\section{Introduction}	


From early applications in magnetic recording systems to recent applications 
in DNA-based data storage \cite{Yazdi2015, Immink2019, Gabrys2020, Nguyen2021, Cai2021,Kovacevic2021} and 
energy-harvesting \cite{Popovski2013,Fouladgar2014,Tandon2016,Immink2020,Immink.2020,Wu2021}, constrained codes play a central role in enhancing reliability in many data storage and communications systems (see also \cite{MRS2001} for a survey).
Specifically, for most data storage systems, certain substrings are more prone to errors than others.
Thus, by forbidding the appearance of such strings, that is, imposing constraints on the codewords, 
the user is able to reduce the likelihood of error.
We refer to the collection of words that satisfy the constraints as the {\em constrained space} $\cS$.

Now, to further reduce the error probability, one can impose certain distance constraints on the codebook.
In this work, we focus on the {\em Hamming metric} and 
consider the maximum size of a codebook whose words belong to the constrained space $\cS$ and whose pairwise distance are at least a certain value $d$. 
Specifically, we study one of the most well-known and fundamental lower bounds to this quantity -- the {\em Gilbert-Varshamov (GV) bound}.

To determine the GV bound, one requires two quantities: the size of the constrained space, $|\cS|$, and also, the {\em ball volume}, that is, the number of words with distance at most $d-1$ from a ``center'' word.
In the case where the space is unconstrained, i.e. $\cS=\{0,1\}^n$, the ball volume does not depend on the center. Then the GV bound is simply $|\cS|/V$ where $V$ is the ball volume for some center.
However, for most constrained systems, the ball volume varies with the center.
Nevertheless, Kolesnik and Krachkovsky showed that the GV lower bound can be generalized to $|\cS|/4\overline{V}$ where $\overline{V}$ is the {\em average ball volume} \cite{Kolesnik1991}.
This was further improved by Gu and Fuja to $|\cS|/\overline{V}$ in ~\cite{GuFuja.1993} (see~\cite[pp. 242-243]{MRS2001} for  additional details).
In the same paper\cite{Kolesnik1991}, they showed the asymptotic rate of average ball volume can be computed via some optimization problem.
Later, Marcus and Roth modified the optimization problem by including an additional constraint and variable~\cite{MarcusRoth1992}, and the resulting bound, which we refer to as {\em GV-MR bound}, improves the usual GV bound. Furthermore, in most cases, the improvement is strictly positive.

However, about three decades later, very few works have evaluated these bounds for specific constrained systems. 
To the best of our knowledge, in all works that computed numerically, the GV bound and / or GV-MR bound, the constrained systems of interest have at most eight states \cite{Winick1996}. 
In~\cite{Winick1996}, the authors wrote that ``evaluation of the bound required considerable computation'', referring to the GV-MR bound.

In this paper, we revisit the optimization problems defined by Kolesnik and Krachkovsky~\cite{Kolesnik1991} and Marcus and Roth~\cite{MarcusRoth1992} and develop a suite of explicit numerical procedures that solve these problems.
In particular, to demonstrate the feasibility of our methods, we evaluate and plot the GV and GV-MR bounds for a constrained system involving 120 states in Fig.~\ref{fig:swcc}(b).

We provide a high-level description of our approach.
For both optimization problems, we first characterize the optimal solutions as roots of certain equations.
Then using the celebrated {\em Newton-Raphson} iterative procedure, we proceed to find the roots of these equations.
However, as the latter equations involve the largest eigenvalues of certain matrices, 
each Newton-Raphson iteration requires the (partial) derivatives of these eigenvalues (in some variables).
To resolve this, we make modifications to another celebrated iterative procedure -- the {\em power iteration} method 
and the resulting procedures compute the GV and GV-MR bounds efficiently for a specific relative distance $\delta$.
Interestingly, if we are plotting the bounds for $0\le \delta\le 1$, the numerical procedure can be further simplified.
Specifically, by exploiting certain properties of the optimal solutions, we provide procedures that use less Newton-Raphson iterations.

In the next section, we provide the formal definitions and state the optimization problems that compute the GV bound.


\section{Preliminaries}
	
Let $\Sigma=\{0,1\}$ be the binary alphabet and let $\Sigma^{n}$ denote the set of all words of length $n$ over $\Sigma$. 
A {\em labelled graph} $\cG = (\cV, \cE, \cL)$ is a 
finite directed graph with {\em states} $\cV$, {\em edges} $\cE \subseteq \cV \times \cV$, and 
an {\em edge labelling} $\cL: \cE \to \Sigma^{s}$ for some $s \geq 1$. 
Here, we use $v_i \xrightarrow{\sigma} v_j$ to mean that there is an edge from $v_i$ to $v_j$ with label $\sigma$. 
The labelled graph $\cG$ is {\em deterministic} if for each state, the outgoing edges have distinct labels.

A {\em constrained system} $\cS$ is then the set of all words obtained by reading the labels of paths in a labelled graph $\cG$. We say that $\cG$ is a {\em graph presentation} of $\cS$. We further denote the set of all length-$n$ words $\cS$ by $\cS_{n}$. 
Alternatively, $\cS_{n}$ is the set of all words obtained by reading the labels of length-$(n/s)$ paths in $\cG$. Then the {\em capacity of $\cS$}, denoted by $\capy(\cS)$ is given by 
$\capy(\cS) \triangleq \limsup_{n\to\infty} {\log |\cS_n|}/{n}$.
It is well-known that $\capy(\cS)$ corresponds to the largest eigenvalue of the {\em adjacency matrix} $\vA_\cG$  (see for example, \cite{MRS2001}).
Here, $\vA_\cG$ is a $(|\cV| \times |\cV|)$-matrix whose rows and columns are indexed by $\cV$. 
For each entry $(u,v)\in \cV\times \cV$, we set the corresponding entry to be one if $(u,v)$ is an edge, and zero, otherwise.

Every constrained system can be presented by a deterministic graph $\cG$ . 
Furthermore, any deterministic graph can be transformed into a primitive deterministic graph $\cH$ such that the capacity of $\cG$ is same as the capacity of the constrained system presented by some irreducible component of $\cH$ (see for example, Marcus \etal{} \cite{MRS2001}).  
Therefore, we henceforth assume that our graphs are deterministic and primitive. 
When $|\cV| = 1$, we call this a {\em single-state graph presentation} and study these graphs in Section~\ref{sec:ss}.

For $\vx,\vy \in \cS$, denoted by $d_{H}(\vx,\vy)$ is the Hamming distance between $\vx$ and $\vy$. 
Fix $1\le d\le n$ and a fundamental problem in coding theory is to find the largest subset $\cC$ of $\cS_{n}$ such that $d_{H}(\vx,\vy) \geq d$ for all distinct $\vx,\vy \in \cC$. Let $A(n,d;\cS)$ denote the size of largest subset $\cC$.

In terms of asymptotic rates, we fix $0 \le  \delta \le 1$ and our task is to find the highest attainable rate, denoted by $R(\delta)$, 
which is given by $R(\delta;\cS) \triangleq \limsup_{n\to\infty} {\log A(n,\floor{\delta n};\cS)}/{n}$.


\subsection{Review of Gilbert-Varshamov Bound}
To define the GV bound, we need to determine the total ball size. Specifically, 
for $\vx\in\cS_n$ and $0\le r\le n$, we define $V(\vx,r;\cS)\triangleq |\{\vy\in\cS_n: d_H(\vx,\vy)\le r\}|$. 
We further define $T(n, d; \cS) = \sum_{\vx\in\cS_n} V(\vx,d-1;\cS)$\,.
Then the GV bound  
as given by Gu and Fuja \cite{GuFuja.1993, Tolhuizhen.1997} states that there exists an $(n,d;\cS)$-code of size at least
${|\cS_n|^2}/{T(n,d;\cS)}$.

In terms of asymptotic rates, there exists a family of $(n,\floor{\delta n};\cS)$-codes such that their rates approach
\vspace{-1mm}
\begin{equation} \label{eq:GV}
	\rgv(\delta) = 2\capy(\cS)-\rateT(\delta),
\end{equation}
where $\rateT(\delta)\triangleq \limsup_{n\to\infty} {\log T(n,\floor{\delta n};\cS)}/{n}$\,.

In this paper, our main task is to determine $\rgv(\delta)$ {\em efficiently}. Observe that since $\capy(\cS)=\rateT(0)$, it suffices to find efficient ways of determining $\rateT(\delta)$.
It turns out that $\rateT(\delta)$ can be found via the solution of some convex optimization problem. 
Specifically, given a labelled graph $\cG= (\cV,\cE,\cL)$, we define its {\em product graph} 
$\cG' = (\cV',\cE',\cL')$ as follows.
\begin{itemize}
	\item  $\cV' \triangleq \cV \times \cV$.
	\item For $(v_i,v_j), (v_k,v_\ell) \in \cV'$ and $(\sigma_1,\sigma_2) \in \Sigma^{s} \times \Sigma^{s}$, we draw an edge $(v_i,v_j) \xrightarrow{(\sigma_1,\sigma_2)} (v_k,v_\ell)$ if and only if both $v_i \xrightarrow{\sigma_1}v_k$ and $v_j \xrightarrow{\sigma_2} v_\ell$ belong to $\cE$.
\end{itemize}

\noindent Then we label the edges in $\cE'$ with the function $D:\cE' \to \mathbb{Z}_{\geq 0}$, where $D\left((v_i,v_j) \xrightarrow{(\sigma_1,\sigma_2)}(v_k,v_\ell)\right) = d_{H}(\sigma_1,\sigma_2)/s$.

And $\rateT(\delta)$ can be obtained by solving the following optimization problem \cite{Kolesnik1991, MarcusRoth1992}.
\vspace{-1mm}
\begin{equation}\label{eq:primal}
	\rateT(\delta)=\sup\left\{H(\va) : 
	\sum_{e\in \cE'}a_e=1,\, \sum_{e\in\cE'} a_eD(e)\le \delta  \right\}.
\end{equation}

To this end, we consider the dual problem of \eqref{eq:primal}. 
Specifically, we define a $(|\cV|^2\times |\cV|^2)$-{\em distance matrix} $\vT_{\cG\times \cG}(y)$ whose rows and columns are indexed by $\cV'$. 
For each entry indexed by $e\in \cV'\times \cV'$,
we set the entry to be zero if $e \notin \cE'$ and we set it to be $y^{D(e)}$ if $e\in \cE'$. 
Then the dual problem can be stated in terms of the dominant eigenvalue of the matrix $\vT_{\cG\times \cG}(y)$.

By applying the reduction techniques from~\cite{MarcusRoth1992}, we can reduce the problem size by a factor of two.
Formally, in the case of $s=1$, we define a $\binom{|\cV|+1}{2} \times \binom{|\cV|+1}{2}$- {\em reduced distance matrix} $\vB_{\cG\times \cG}(y)$ whose rows and columns are indexed by $\cV^{(2)} \triangleq \{(v_i,v_j): 1 \leq i \leq j \leq |\cV|\}$ using the following procedure.

Two states $s_1 = (v_i,v_j)$ and $s_2 = (v_k,v_\ell)$ in $\cG \times \cG$ are said to be {\em equivalent} if $v_i = v_\ell$ and $v_j = v_k$. 
The matrix $\vB_{\cG\times \cG}(y)$ is then obtained by merging all pairs of equivalent states $s_1$ and $s_2$. That is, we add the column indexed by $v_2$ to the column indexed by $v_1$ and then remove the row and column which are indexed by $v_2$. Note that it may be possible to reduce the size of this matrix $\vB_{\cG\times \cG}(y)$ further.
However, for the ease of exposition, we do not consider this case in this work.

Following this procedure, we observe that the entries in the matrix $\vB_{\cG\times \cG}(y)$ can be described by the rules in TABLE~\ref{table:BGG}. Moreover, the dominant eigenvalue of $\vB_{\cG\times \cG}(y)$ is the same as that of $\vT_{\cG\times \cG}(y)$. Then by strong duality, computing \eqref{eq:primal} is equivalent to solving the following dual problem~\cite{Luenberger.1973,Rockafellar.1970} (see also, \cite{Kashyap2019}). 
\vspace{-1mm}
\begin{equation}\label{eq:dual}
	\rateT(\delta)=\inf\left\{-\delta \log y +\log \Lambda(\vB_{\cG\times \cG}(y)) : 
	0\le y\le 1  \right\}.
\end{equation}

\noindent Here, we use $\Lambda(\vM)$ to denote the dominant eigenvalue of matrix $\vM$.
To simplify further, we write $\Lambda(y;\vB)\triangleq \Lambda(\vB_{\cG\times \cG}(y))$.

Since the objective function in \eqref{eq:dual} is convex,
it follows from standard calculus that any local minimum solution $y^*$ in the interval $[0,1]$ is also a global minimum solution.
Furthermore, $y^*$ is a zero of the first derivative of the objective function. If we consider the numerator of this derivative, then $y^*$ is a root of the function,
\begin{equation}\label{eq:derivative}
	F(y) \triangleq y\Lambda'(y;\vB) -\delta\Lambda(y;\vB) .
\end{equation}

In Corollary~\ref{cor:uniquesol}, we show that there is only one $y^*$ such that $F(y^*) = 0$ and $F'(y)$ is strictly positive for all values of $y$. Therefore, to evaluate the GV bound for a fixed $\delta$, it suffices to determine $y^*$.

Later, Marcus and Roth \cite{MarcusRoth1992} improve the GV bound \eqref{eq:GV} by considering certain subsets of the constrained space $\cS$. 
This entails the inclusion of an additional constraint defined in optimization problem \eqref{eq:primal}, and 
correspondingly, an additional variable in the dual problem \eqref{eq:dual}. 
Specifically, they considered certain subsets $\cS(p)\subseteq \cS$ 
where each symbol in the words of $\cS(p)$ appears with a certain frequency dependent on the parameter $p$. We describe this in more detail in Section~\ref{sec:mr}. 
\begin{table}[!t]
	\begin{center}
		\ra{1.5}
		\setlength{\tabcolsep}{4pt}
		\begin{tabular}{C{70mm} ccccc}
			\hline\hline
			{$\vB_{\cG\times\cG}(y)$ at entry $\left((v_i,\!v_j\!),(v_k,\!v_\ell\!)\right)$}
			&
			\multicolumn{5}{c}{Subgraph induced by the states $\{v_i,v_j,v_k,v_\ell\}$}
			\\ \hline \hline
			
			$0$ &
			\begin{tikzpicture}[scale=.7, transform shape]
				\tikzstyle{every node} = [draw,shape = circle]
				\node (i) at (0, 0) {$v_i$};
				\node (j) at (0, -1.5) {$v_j$};
				\node (k) at (1.5, 0) {$v_k$};
				\node (l) at (1.5, -1.5) {$v_\ell$};
			\end{tikzpicture} &
			\begin{tikzpicture}[scale=.7, transform shape]
				\tikzstyle{every node} = [draw,shape = circle]
				\node (i) at (0, 0) {$v_i$};
				\node (j) at (0, -1.5) {$v_j$};
				\node (k) at (1.5, 0) {$v_k$};
				\node (l) at (1.5, -1.5) {$v_\ell$};
				\draw [->] (i) -- (k);
			\end{tikzpicture} &
			\begin{tikzpicture}[scale=.7, transform shape]
				\tikzstyle{every node} = [draw,shape = circle]
				\node (i) at (0, 0) {$v_i$};
				\node (j) at (0, -1.5) {$v_j$};
				\node (k) at (1.5, 0) {$v_k$};
				\node (l) at (1.5, -1.5) {$v_\ell$};
				\draw [->] (i) -- (l);
			\end{tikzpicture} 
			&
			\begin{tikzpicture}[scale=.7, transform shape]
				\tikzstyle{every node} = [draw,shape = circle]
				\node (i) at (0, 0) {$v_i$};
				\node (j) at (0, -1.5) {$v_j$};
				\node (k) at (1.5, 0) {$v_k$};
				\node (l) at (1.5, -1.5) {$v_\ell$};
				\draw [->] (j) -- (k);
			\end{tikzpicture} &
			\begin{tikzpicture}[scale=.7, transform shape]
				\tikzstyle{every node} = [draw,shape = circle]
				\node (i) at (0, 0) {$v_i$};
				\node (j) at (0, -1.5) {$v_j$};
				\node (k) at (1.5, 0) {$v_k$};
				\node (l) at (1.5, -1.5) {$v_\ell$};
				\draw [->] (j) -- (l);
			\end{tikzpicture} \\ \hline
			1 & 
			\begin{tikzpicture}[scale=.7, transform shape]
				\tikzstyle{every node} = [draw,shape = circle]
				\node (i) at (0, 0) {$v_i$};
				\node (j) at (0, -1.5) {$v_j$};
				\node (k) at (1.5, 0) {$v_k$};
				\node (l) at (1.5, -1.5) {$v_\ell$};
				\draw [->] (i) -- (k) node[pos = 0.5,above,draw = none] {$\sigma$};
				\draw [->] (j) -- (l) node[pos = 0.5,above,draw = none] {$\sigma$};
			\end{tikzpicture} &
			\begin{tikzpicture}[scale=.7, transform shape]
				\tikzstyle{every node} = [draw,shape = circle]
				\node (i) at (0, 0) {$v_i$};
				\node (j) at (0, -1.5) {$v_j$};
				\node (k) at (1.5, 0) {$v_k$};
				\node (l) at (1.5, -1.5) {$v_\ell$};
				\draw [->] (i) -- (l) node[pos = 0.3,above,draw = none] {$\sigma$};
				\draw [->] (j) -- (k) node[pos = 0.1,above,draw = none] {$\sigma$};
			\end{tikzpicture} &
			\begin{tikzpicture}[scale=.7, transform shape]
				\tikzstyle{every node} = [draw,shape = circle]
				\node (i) at (0, 0) {$v_i$};
				\node (j) at (0, -1.5) {$v_j$};
				\node (k) at (1.5, -0.75) {$v_k$};
				\draw [->] (i) -- (k) node[pos = 0.5,above,draw = none] {$\sigma$};
				\draw [->] (j) -- (k) node[pos = 0.5,above,draw = none] {$\sigma$};
			\end{tikzpicture}  \\ \hline
			$y$ &
			\begin{tikzpicture}[scale=.7, transform shape]
				\tikzstyle{every node} = [draw,shape = circle]
				\node (i) at (0, 0) {$v_i$};
				\node (j) at (0, -1.5) {$v_j$};
				\node (k) at (1.5, 0) {$v_k$};
				\node (l) at (1.5, -1.5) {$v_\ell$};
				\draw [->] (i) -- (k) node[pos = 0.5,above,draw = none] {$\sigma$};
				\draw [->] (j) -- (l) node[pos = 0.5,above,draw = none] {$\bar{\sigma}$};
			\end{tikzpicture} &
			\begin{tikzpicture}[scale=.7, transform shape]
				\tikzstyle{every node} = [draw,shape = circle]
				\node (i) at (0, 0) {$v_i$};
				\node (j) at (0, -1.5) {$v_j$};
				\node (k) at (1.5, 0) {$v_k$};
				\node (l) at (1.5, -1.5) {$v_\ell$};
				\draw [->] (i) -- (l) node[pos = 0.3,above,draw = none] {$\sigma$};
				\draw [->] (j) -- (k) node[pos = 0.1,above,draw = none] {$\bar{\sigma}$};
			\end{tikzpicture} &
			\begin{tikzpicture}[scale=.7, transform shape]
				\tikzstyle{every node} = [draw,shape = circle]
				\node (i) at (0, 0) {$v_i$};
				\node (j) at (0, -1.5) {$v_j$};
				\node (k) at (1.5, -0.75) {$v_k$};
				\draw [->] (i) -- (k) node[pos = 0.5,above,draw = none] {$\sigma$};
				\draw [->] (j) -- (k) node[pos = 0.5,above,draw = none] {$\bar{\sigma}$};
			\end{tikzpicture}\\ \hline
			$2y$ &
			\begin{tikzpicture}[scale=.7, transform shape]
				\tikzstyle{every node} = [draw,shape = circle]
				\node (k) at (0, 0) {$v_k$};
				\node (l) at (0, -1.5) {$v_\ell$};
				\node (i) at (-1.5, -0.75) {$v_i$};
				\draw [->] (i) -- (k) node[pos = 0.5,above,draw = none] {$\sigma$};
				\draw [->] (i) -- (l) node[pos = 0.5,above,draw = none] {$\bar{\sigma}$};
			\end{tikzpicture}  
			\\ \hline
		\end{tabular}
		\vspace{-2mm}
	\end{center}
	\caption{We set the $\left((v_i,v_j),(v_k,v_\ell)\right)$-entry of the matrix $\vB_{\cG\times \cG}(y)$ according to subgraph induced by the states $v_i$,$v_j$,$v_k$, and $v_\ell$. Here, $\bar{\sigma}$ denotes the complement of $\sigma$.\vspace{-5mm}  }
	\label{table:BGG}
\end{table}

\subsection{Our Contributions}

\begin{enumerate}[(A)]
	\item In Section~\ref{sec:gv}, we develop the numerical procedures to compute $\rateT(\delta)$ for a fixed $\delta$ and hence, determine the GV bound \eqref{eq:GV}. Our procedure modifies the well-known {\em power iteration method} to compute the derivatives of $\Lambda(y;\vB)$. After that, using these derivatives, we apply the classical Newton-Raphson method to determine the root of \eqref{eq:derivative}.
	In the same section, we also study procedures to plot the GV curve, that is, the set  $\{(\delta,\rgv(\delta)): 0\le \delta\le 1\}$.
	Here, we demonstrate that the GV curve can be plotted without any Newton-Raphson iteration.	
	
	
	\item In Section~\ref{sec:mr}, we then develop similar power iteration methods and numerical procedures to compute the GV-MR bound. Similar to the GV curve, we also provide a plotting procedure that uses significantly less Newton-Raphson iterations.
	
	\item In Section~\ref{sec:ss}, we provide explicit formulas for the computation of GV bound and GV-MR bound for graph presentations that have exactly one state but multiple parallel edges. 
	
	\item In Section~\ref{sec:numerical}, we validate our methods by computing the GV and the GV-MR bounds for some specific constrained systems. 
	For comparison purposes, we also plot a simple lower bound that is obtained by using an upper estimate of the ball size. 
	From the plots in Figures~\ref{fig:swcc},~\ref{fig:rll} and~\ref{fig:secc}, it is also clear that the GV and GV-MR bounds are significantly better.
	We also observe that the GV bound and GV-MR bound for {\em subblock energy constrained codes (SECC)} obtained through our procedures improve the GV-type bound given by Tandon et al.~\cite[Proposition~12]{Tandon2018}. 
	
\end{enumerate}

\section{Evaluating the Gilbert-Varshamov Bound}
\label{sec:gv}

In this section, we first describe a numerical procedure that solves \eqref{eq:dual} and hence determine $\rgv(\delta)$ for fixed values of $\delta$. Then we showed the procedure can be simplified when we compute the GV-curve, that is, the set of points $\{(\delta,\rgv(\delta)) : \delta\in \interval{0,1}\}$. 
Here, we abuse notation and use $\interval{a,b}$ to denote the interval $\{x : a\le x\le b\}$, if $a<b$; and the interval $\{x : b\le x\le a\}$, otherwise.

Below we provide formal description of our procedure to obtain the GV bound for a fixed relative distance $\delta$.

\vspace{2mm}

\noindent{\bf Procedure 1 (GV bound for fixed relative distance)}.

\noindent{\sc Input}: 
Adjacency matrix $\vA_{\cG}$, reduced distance matrix $\vB_{\cG\times\cG}(y)$, and relative minimum distance $\delta$\\[1mm]  
\noindent{\sc Output}: GV bound, that is, $\rgv(\delta)$ as defined in \eqref{eq:GV}  
\begin{enumerate}[(1)]
	\item Apply the Newton-Raphson method to obtain $y^*$ such that $F(y^*)$ is approximately zero. 
	\begin{itemize}
		\item Fix some tolerance value $\epsilon$.
		\item Set $t = 0$ and pick some initial guess $0\le y_t\le 1$.
		\item While $|y_{t}-y_{t-1}|>\epsilon$~:
		\begin{itemize}
			\item Compute the next guess $y_{t+1}$ as follows.
			\begin{equation*}
				\hspace*{-10mm}
				y_{t+1} =y_t-\dfrac{F(y_t)}{F'(y_t)}
				= y_t-\dfrac{y_t\Lambda'(y_t;\vB)-\delta\Lambda(y_t;\vB))}{(1-\delta)\Lambda'(y_t;\vB)+y_t\Lambda''(y_t;\vB)}\,.
			\end{equation*}
			\item In this step, we apply the power iteration method to compute $\Lambda(y_t;\vB)$, $\Lambda'(y_t;\vB)$, and $\Lambda''(y_t;\vB)$.
			\item Increment $t$ by one.
		\end{itemize}
		\item Set $y^*\gets y_t$.
	\end{itemize}
	\item Determine $\rgv(\delta)$ using $y^*$. Specifically, we compute:
	$\rateT(\delta) \triangleq -\delta\log y^* + \log \Lambda(y^*;\vB)$,
	$\capy(\cS) \triangleq \log \Lambda(\vA_\cG)$, and
	$\rgv(\delta) \triangleq 2\capy(\cS) - \rateT(\delta)$.
\end{enumerate}

Throughout Sections~\ref{sec:gv} and~\ref{sec:mr}, we illustrate our numerical procedures via a running example using the class of {\em sliding window constrained codes (SWCC)}. 
Formally, we fix a {\em window length} $L$ and {\em window weight} $w$, and say that a binary word satisfies the $(L,w)$-{\em sliding window weight-constraint} if the number of ones in every consecutive $L$ bits is at least $w$. We refer to the collection of words that meet this constraint as an $(L,w)$-{\em SWCC constrained system}.
The class of SWCCs was introduced by Tandon \etal{} for the
application of simultaneous energy and information transfer \cite{Tandon2016, Wu2021}. 
Later, Immink and Cai \cite{Immink2020,Immink.2020} studied encoders for this constrained system and provided a simple graph presentation that uses only $\binom{L}{w}$ states.

In the next example, we illustrate how the numerical procedure can be used to compute the GV bound for the value when $\delta=0.1$.

\begin{example}
	Let $L=3$ and $w=2$ and we consider $(3,2)$-SWCC constrained system.
	From \cite{Immink2020}, we have the following graph presentation with states {\tt x11}, {\tt 101}, and {\tt 110}.
	
	\begin{center}
		\begin{tikzpicture}[scale=.8]
			\tikzstyle{every node} = [draw,shape = circle]
			\node (x11) at (0, 0) {{\tt x11}};
			\node (110) at (3, 0) {{\tt 110}};
			\node (101) at (6, 0) {{\tt 101}};
			\draw [->] (x11) to node[pos = 0.5,below,draw = none] {{\tt 0}} (110);
			\draw [->] (110) to node[pos = 0.5,below,draw = none] {{\tt 1}} (101);
			\draw [->] (101) to [bend right=40]  node[pos =0.5, above, draw = none] {{\tt 1}} (x11);
			\draw [->] (x11) to [out=135, in=180, loop]  node[pos =0.25, above, draw = none] {{\tt 1}} (x11);
		\end{tikzpicture}
	\end{center}
	
	Then the corresponding adjacency and reduced distance matrices are as follows:
	\[
	\vA_{\cG}  = 
	\begin{bmatrix}
		1 & 1 & 0\\
		0 & 0 & 1\\
		1 & 0 & 0
	\end{bmatrix},\,
	\vB_{\cG\times \cG} (y) = 
	\begin{bmatrix}
		1 & 2y & 0 & 1 & 0 & 0\\
		0 & 0 & 1 & 0 & y & 0\\
		1 & y & 0 & 0 & 0 & 0\\
		0 & 0 & 0 & 0 & 0 & 1\\
		0 & 0 & 1 & 0 & 0 & 0\\
		1 & 0 & 0 & 0 & 0 & 0 \\
	\end{bmatrix}\,.
	\]
	To determine the GV bound at $\delta = 0.1$, we first approximate the optimal point $y^*$ for which $-\delta \log y + \log \Lambda(y;\vB)$ is minimized.
	
	We apply the Newton-Raphson method to find a zero of the function $F(y)$.
	Now, with the initial guess $y_0=0.3$, we apply the power iteration method to determine 
	\[\Lambda(0.3;\vB)=1.659,~\Lambda'(0.3;\vB)=0.694,~\Lambda''(0.3;\vB)=0.183.\]
	\noindent Then we compute that $y_1 \approx 0.238$. Repeating the computations, we have that $y_2\approx 0
	238$. Since $|y_2-y_1|$ is less than the tolerance value $10^{-5}$, we set $y^*=0.238$.
	Hence, we have that $\rateT(0.1)= 0.9$. 
	Applying the power iteration method to either $\vA_\cG$ or $\vB_{\cG\times \cG}(0)$, we compute the capacity of $(3,2)$-SWCC constrained system to be $\capy(\cS) = 0.551$.
	Then the GV bound is given by $\rgv(0.1)=2(0.551)-0.9=0.202$. \hfill\IEEEQEDhere
\end{example}

We discuss the convergence issues arising from Procedure 1.
Observe that there are two different iterative processes in Step 1: namely,
(a)~the power iteration method to compute the values  $\Lambda(y_t;\vB)$, $\Lambda'(y_t;\vB)$, and $\Lambda''(y_t;\vB)$; and 
(b)~the Newton-Raphson method that determines the zero of $F(y)$.

\begin{enumerate}[(a)]
	\item Recall that $\Lambda(y;\vB)$ is the largest eigenvalue of the reduced distance matrix $\vB_{\cG\times \cG}(y)$. 
	If we apply naive methods to compute this dominant eigenvalue,  the computational complexity increases very rapidly with the matrix size. 
	Specifically, if $\cG$ has $M$ states, then the reduced distance matrix has dimension $\Theta(M^2)\times \Theta(M^2)$ and finding its characteristic equation takes $O(M^6)$ time. Even then, determining the exact roots of characteristic equations with degree at least five is generally impossible. Therefore, we turn to the numerical procedures like the ubiquitous power iteration method~\cite{Stewart.1973}.
	
	However, the standard power iteration method is only able to compute the dominant eigenvalue $\Lambda(y;\vB)$. 
	Nevertheless, we can modify the power iteration method to compute $\Lambda(y;\vB)$ and its higher order derivatives.
	In Appendix~\ref{app:power}, we demonstrate that under certain mild assumptions, the modified power iteration method always converges.
	Moreover, using the sparsity of the reduced distance matrix, we have that each iteration can be completed in $O(M^2)$ time.
	\item Next, we discuss whether we can guarantee that $y_t$ converges to $y^*$ as $t$ approaches infinity. 
	Even though the Newton-Raphson method converges in all our numerical experiments, we are unable to demonstrate that it always converges for $F(y)$.
	Nevertheless, we can circumvent this issue if we are interested in {\em plotting the GV curve}. Specifically, if our objective is to determine the curve $\{ (\delta,\rgv(\delta)): \delta\in\interval{0,1} \}$, it turns out that we need not implement the Newton-Raphson iterations and we discuss this next.
\end{enumerate}


Fix some constrained system $\cS$. Let us define its corresponding {\em GV curve} to be the set of points
$\gvcurve(\cS)\triangleq \{(\delta,\rgv(\delta)): \delta\in\interval{0,1} \}$.
Here, we demonstrate that the GV curve can be plotted without any Newton-Raphson iteration.

To this end, we observe that when $F(y^*)=0$, we have that $\delta = y^*\Lambda'(y^*;\vB)/\Lambda(y^*;\vB)$. Hence, we abuse notation and define the function 
\begin{equation}\label{eq:delta}
	\delta(y) \triangleq y\Lambda'(y;\vB)/\Lambda(y;\vB)\,.
\end{equation}
We further define $\delta_{\rm max} = \delta(1)=\Lambda'(1;\vB)/\Lambda(1;\vB)$.
In this section, we prove the following theorem.
\begin{theorem}\label{thm:curve}
	Let $\cG$ be the graph presentation for the constrained system $\cS$.
	If we define the function
	\begin{equation}\label{eq:rho}
		\rho_{\rm GV}(y) \triangleq 2\capy(\cS) + \delta(y)\log y - \log \Lambda(y;\vB)\,,
	\end{equation}
	then the corresponding GV curve is given by 
	\begin{equation} \label{eq:gvcurve}
		\gvcurve(\cS) = \big\{(\delta(y), \rho_{\rm GV}(y)) : y\in [0,1] \big\} 
		\cup \big\{ (\delta, 0) : \delta\ge \delta_{\rm max} \big\}\,. 
	\end{equation}
\end{theorem}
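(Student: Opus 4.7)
The strategy is to reparametrise the dual problem \eqref{eq:dual} by the optimiser $y^{*}$ in place of $\delta$. Writing $g_{\delta}(y)\triangleq -\delta\log y+\log\Lambda(y;\vB)$, the objective is convex on $(0,1]$ and blows up at $y\to 0^{+}$ whenever $\delta>0$, so its minimiser over $[0,1]$ is either an interior critical point where $F(y^{*})=0$ or the right endpoint $y^{*}=1$. These two alternatives correspond precisely to the two sets on the right-hand side of \eqref{eq:gvcurve}, and I would prove each inclusion separately after a short surjectivity argument.

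For the ``interior'' regime $\delta\in[0,\delta_{\rm max}]$, I would start from the critical-point equation $F(y^{*})=0$, which rearranges to $\delta=y^{*}\Lambda'(y^{*};\vB)/\Lambda(y^{*};\vB)=\delta(y^{*})$. Substituting this into $g_{\delta}(y^{*})$ gives $\rateT(\delta)=-\delta(y^{*})\log y^{*}+\log\Lambda(y^{*};\vB)$, and then $\rgv(\delta)=2\capy(\cS)-\rateT(\delta)=\rho_{\rm GV}(y^{*})$. Uniqueness of $y^{*}$ is provided by Corollary~\ref{cor:uniquesol}. To guarantee that every $\delta\in[0,\delta_{\rm max}]$ is realised, I would use continuity of $\delta(\cdot)$ on $[0,1]$ (Perron--Frobenius ensures $\Lambda(y;\vB)>0$ throughout, and $\Lambda,\Lambda'$ are analytic in $y$), combined with the endpoint values $\delta(0)=0$ and $\delta(1)=\delta_{\rm max}$, and then apply the intermediate value theorem.

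For the ``boundary'' regime $\delta>\delta_{\rm max}$, the derivative of $g_{\delta}$ at $y=1$ equals $-\delta+\delta_{\rm max}<0$; by convexity of $g_{\delta}$, the objective is then strictly decreasing throughout $[0,1]$, so the infimum is attained at $y^{*}=1$ and equals $\log\Lambda(1;\vB)$. The key identity I need here is $\log\Lambda(1;\vB)=2\capy(\cS)$, which follows because $\vT_{\cG\times\cG}(1)$ is the adjacency matrix of the product graph $\cG\times\cG$ (every edge weight is $1^{D(e)}=1$). That adjacency matrix is the Kronecker product $\vA_{\cG}\otimes \vA_{\cG}$ with dominant eigenvalue $\Lambda(\vA_{\cG})^{2}$, and the reduction from $\vT_{\cG\times\cG}$ to $\vB_{\cG\times\cG}$ preserves dominant eigenvalues by hypothesis. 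Hence $\rateT(\delta)=2\capy(\cS)$ and $\rgv(\delta)=0$ for every $\delta\ge\delta_{\rm max}$, matching the second set in \eqref{eq:gvcurve}.

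The main obstacle I anticipate is the delicate behaviour at the left endpoint $y=0$: although $\log y\to-\infty$, the factor $\delta(y)=y\Lambda'(y;\vB)/\Lambda(y;\vB)$ vanishes linearly in $y$, so $\delta(y)\log y\to 0$ and $\rho_{\rm GV}$ extends continuously with $\rho_{\rm GV}(0)=2\capy(\cS)-\log\Lambda(0;\vB)=\capy(\cS)$; this agrees with $\rgv(0)$ since $\vB_{\cG\times\cG}(0)$ retains only the ``matched-label'' edges whose dominant eigenvalue equals $\Lambda(\vA_{\cG})$. A small consistency check at $y=1$ (namely $\rho_{\rm GV}(1)=0$, matching the onset of the horizontal zero segment at $\delta=\delta_{\rm max}$) closes the gluing between the two components. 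Once these endpoint verifications and the convexity/monotonicity facts above are in place, the two regimes combine to give precisely \eqref{eq:gvcurve}.
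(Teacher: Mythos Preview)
Your proposal is correct and follows essentially the same route as the paper: reparametrise the dual by the optimiser, use the critical-point identity $\delta=\delta(y^{*})$ to obtain $\rgv(\delta(y^{*}))=\rho_{\rm GV}(y^{*})$, and handle $\delta\ge\delta_{\rm max}$ via the tensor-product identity $\Lambda(1;\vB)=\Lambda(\vA_{\cG})^{2}$. The only caveat is that you invoke Corollary~\ref{cor:uniquesol}, which in the paper is deduced \emph{from} Lemma~\ref{lem:delta-rho} (itself part of the proof of Theorem~\ref{thm:curve}); to avoid circularity, simply replace that citation by the direct observation that strict convexity of $g_{\delta}$ forces any interior critical point to be the unique global minimiser.
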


Before we prove Theorem~\ref{thm:curve}, we discuss its implications.
Note that to compute $\delta(y)$ and $\rho(y)$, it suffices to determine $\Lambda(y;\vB)$ and $\Lambda'(y;\vB)$ using the modified power iteration methods described in Appendix~\ref{app:power}.
In other words, no Newton-Raphson iterations are required. 
We also have additional computational savings as we need not apply the power iteration method to compute the second derivative $\Lambda''(y;\vB)$.

\begin{example}
	We continue our example and plot the GV curve for the $(3,2)$-SWCC constrained system in Fig.~\ref{fig:swcc}(a). 
	Before plotting, we observe that when $y=0$, we have $(\delta(0),\rho(0))=(0,0.551)=(0,\capy(\cS))$, as expected.
	When $y=1$, we have $\delta(1)=\delta_{\rm max} = 0.313$.
	Indeed, both $\rho(1)$ and $\rgv(\delta_{\rm max})$ are equal to zero and we have that $\rgv(\delta) = 0$ for $\delta\ge \delta_{\rm max}$.
	
	Next, we compute a set of 100 points on the GV curve.
	If we apply Procedure 1 to compute $\rgv(\delta)$ for 100 values of $\delta$ in the interval $[0,\delta_{\rm max}]$,
	we require 275 Newton-Raphson iterations and 6900 power iterations to find these points.
	In contrast, applying Theorem~\ref{thm:curve}, we compute $(\delta(y),\rho(y))$ for 100 values of $y$ in the interval $\interval{0,1}$. 
	This does not require any Newton-Raphson iterations and involves only 2530 power iterations.
	\hfill \IEEEQEDhere
\end{example}

	To prove Theorem~\ref{thm:curve}, we demonstrate the following lemmas.
	Our first lemma is immediate from the definitions of $\rgv$, $\delta$ and $\rho$ 
	in \eqref{eq:GV}, \eqref{eq:delta}, and \eqref{eq:rho}, respectively.
	
	\begin{lemma}\label{lem:rgv-rho}
		$\rgv(\delta(y))=\rho(y)$ for all $y\in[0,1]$.
	\end{lemma}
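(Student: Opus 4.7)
The plan is to prove this via direct substitution, exploiting the characterization of the optimal $y^\ast$ in the dual problem \eqref{eq:dual} together with the definition of $\delta(y)$. By the defining equation \eqref{eq:delta}, we immediately have $y\Lambda'(y;\vB) - \delta(y)\Lambda(y;\vB) = 0$; that is, $y$ is a root of the function $F$ from \eqref{eq:derivative} when the fixed relative distance is taken to be $\delta(y)$. Invoking Corollary~\ref{cor:uniquesol}, this root is unique, so $y$ must in fact be the minimizer $y^\ast$ attaining the infimum in \eqref{eq:dual} for the choice $\delta = \delta(y)$.

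Once this identification is made, the rest is bookkeeping. Substituting $y$ as the optimizer in \eqref{eq:dual} yields
\begin{equation*}
  \rateT(\delta(y)) = -\delta(y)\log y + \log \Lambda(y;\vB).
\end{equation*}
Plugging this into \eqref{eq:GV} and comparing with \eqref{eq:rho} gives
\begin{equation*}
  \rgv(\delta(y)) = 2\capy(\cS) - \rateT(\delta(y)) = 2\capy(\cS) + \delta(y)\log y - \log\Lambda(y;\vB) = \rho_{\rm GV}(y),
\end{equation*}
which is the desired identity.

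The only subtlety I anticipate concerns the boundary $y=0$: the term $\delta(y)\log y$ is of the form $0 \cdot (-\infty)$ and needs to be interpreted as $0$, which is consistent with the convention that the infimum in \eqref{eq:dual} at $\delta=0$ recovers the capacity $\capy(\cS) = \log \Lambda(0;\vB)$. All other steps are a one-line substitution, so no serious obstacle is expected.
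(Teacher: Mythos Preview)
Your proof is correct and follows essentially the same route the paper intends: the paper merely states that the lemma is ``immediate from the definitions of $\rgv$, $\delta$ and $\rho$,'' and what you have written is precisely the unpacking of that claim --- identify $y$ as the minimizer in \eqref{eq:dual} for the choice $\delta=\delta(y)$, then substitute.

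One small presentational point: you invoke Corollary~\ref{cor:uniquesol}, which in the paper appears \emph{after} Lemma~\ref{lem:rgv-rho} and is itself deduced from Lemma~\ref{lem:delta-rho}. This is not a logical circularity, since Corollary~\ref{cor:uniquesol} depends only on Lemma~\ref{lem:delta-rho} and not on the lemma you are proving; but it does create a forward reference. A cleaner phrasing, matching what the paper has in mind, is to appeal directly to the convexity of the objective in \eqref{eq:dual}: since the objective is convex in $y$, any stationary point is the global minimizer, and \eqref{eq:delta} says exactly that $y$ is stationary when the parameter is $\delta(y)$. This avoids citing a downstream result. Your handling of the boundary $y=0$ is appropriate and consistent with Lemma~\ref{lem:delta-rho}.
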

	
	The next lemma studies the behaviour of both $\delta$ and $\rho$ as functions in $y$.
	
	\begin{lemma}\label{lem:delta-rho}
		In terms of $y$, the functions $\delta(y)$ and $\rho(y)$ are monotone increasing and decreasing, respectively.
		Furthermore, we have that 
		$(\delta(0),\rho(0))=(0,\capy(\cS))$, $(\delta(1),\rho(1))=(\delta_{\rm max},0)$ and $\rgv(\delta)=0$ for $\delta\ge \delta_{\rm max}$. 
	\end{lemma}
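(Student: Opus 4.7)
The plan is to establish the two monotonicities by direct differentiation and then read off the boundary data from Lemma~\ref{lem:rgv-rho} together with elementary eigenvalue identities for the reduced distance matrix. The key analytic input is that $u\mapsto \log\Lambda(e^u;\vB)$ is convex in $u$, which is exactly what drives monotonicity of $\delta(y)$.

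For monotonicity of $\delta$, set $h(u):=\log\Lambda(e^u;\vB)$. The chain rule gives
\[
h'(u)=\frac{y\Lambda'(y;\vB)}{\Lambda(y;\vB)}=\delta(y),\qquad y=e^u,
\]
so it suffices to prove $h$ convex. I would do this by a generating-function / pressure argument. Writing $\vB(y)=\vB_0+y\vB_1$ with $\vB_0,\vB_1\ge 0$ (as read off from Table~\ref{table:BGG}), the entry sum of $\vB(y)^n$ is a polynomial $Z_n(y)=\sum_{k\ge 0} N_n(k)\,y^k$ with nonnegative coefficients, and by primitivity $n^{-1}\log Z_n(y)\to\log\Lambda(y;\vB)$ (Gelfand's formula). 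For each fixed $n$, the map $u\mapsto \log Z_n(e^u)=\log\sum_k N_n(k)e^{uk}$ is the log-moment-generating function of a nonnegative discrete measure and hence convex in $u$, so $h$ is a pointwise limit of convex functions and is itself convex. This yields $\delta'(y)\ge 0$ for $y>0$. For monotonicity of $\rho$, differentiating~\eqref{eq:rho} and using the identity $\delta(y)/y=\Lambda'(y;\vB)/\Lambda(y;\vB)$ from~\eqref{eq:delta} shows that the two non-derivative terms cancel, leaving $\rho'(y)=\delta'(y)\log y$, which is nonpositive on $(0,1]$ since $\log y\le 0$.

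For the boundary data, the factor of $y$ in~\eqref{eq:delta} immediately gives $\delta(0)=0$; the elementary identity $T(n,1;\cS)=|\cS_n|$ gives $\rateT(0)=\capy(\cS)$, so $\rgv(0)=\capy(\cS)$, and Lemma~\ref{lem:rgv-rho} then delivers $\rho(0)=\rgv(\delta(0))=\capy(\cS)$. At $y=1$ we have $\delta(1)=\delta_{\rm max}$ by definition and $\rho(1)=2\capy(\cS)-\log\Lambda(1;\vB)$; the identity $\log\Lambda(1;\vB)=2\capy(\cS)$ follows from the observation that $\vT_{\cG\times\cG}(1)=\vA_\cG\otimes \vA_\cG$ (each entry at $y=1$ simply counts pairs of edges between the specified states), whose dominant eigenvalue is $\Lambda(\vA_\cG)^2$, combined with $\Lambda(\vB_{\cG\times\cG}(1))=\Lambda(\vT_{\cG\times\cG}(1))$ coming from the reduction procedure. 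Finally, for $\delta\ge\delta_{\rm max}$ the derivative of the dual objective $g_\delta(y):=-\delta\log y+\log\Lambda(y;\vB)$ equals $(\delta(y)-\delta)/y\le 0$ on $(0,1]$, since $\delta(y)\le\delta_{\rm max}\le\delta$; hence the infimum in~\eqref{eq:dual} is attained at $y=1$, giving $\rateT(\delta)=\log\Lambda(1;\vB)=2\capy(\cS)$ and therefore $\rgv(\delta)=0$.

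The main obstacle is the convexity of $h(u)$ invoked in the first step. The generating-function route above is clean, but it implicitly uses that the walk counts $N_n(k)$ in the reduced graph are nonnegative (guaranteed by Table~\ref{table:BGG}, since the reduction merges symmetric states without introducing signs); alternatively, one could appeal to a general log-convexity theorem for Perron eigenvalues of nonnegative matrices whose entries are monomials in a single scalar parameter.
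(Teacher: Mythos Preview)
Your proof is correct and takes a genuinely different route from the paper's in two places.

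For the monotonicity of $\delta$, the paper simply invokes the fact that \eqref{eq:dual} is a convex minimization problem (citing standard duality references) and then substitutes $\delta=y\Lambda'/\Lambda$ into the second-derivative condition to obtain the inequality $\Lambda(\Lambda'+y\Lambda'')-y(\Lambda')^2>0$. You instead prove directly that $h(u)=\log\Lambda(e^u;\vB)$ is convex via the log-moment-generating-function argument on the walk-counting polynomials $Z_n$. The two are the same inequality in disguise: after the chain rule, your $h''(u)\ge 0$ is exactly the paper's displayed inequality divided by $\Lambda^2/y$. What your route buys is a self-contained verification that does not rely on an externally cited convexity statement; a minor cost is that the pointwise-limit argument yields only $\delta'(y)\ge 0$ rather than the strict inequality the paper asserts, but weak monotonicity suffices for the lemma and for Theorem~\ref{thm:curve}.

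For the monotonicity of $\rho$, the paper argues qualitatively that $\rateT(\delta)$ increases in $\delta$ (total ball volume grows with the radius), whence $\rho(y)=2\capy(\cS)-\rateT(\delta(y))$ decreases. Your identity $\rho'(y)=\delta'(y)\log y$ is cleaner and shows that the monotonicity of $\rho$ is a formal consequence of that of $\delta$, requiring no separate combinatorial input. Your treatment of the boundary case $\delta\ge\delta_{\rm max}$ is also more explicit than the paper's: you identify where the infimum in \eqref{eq:dual} is attained by checking the sign of $g_\delta'$, whereas the paper simply records $\rateT(\delta_{\rm max})=2\capy(\cS)$ and concludes. The argument for $\rho(1)=0$ via $\vT_{\cG\times\cG}(1)=\vA_\cG\otimes\vA_\cG$ is the same in both.
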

	
	\begin{IEEEproof}To simplify notation, we write $\Lambda(y;\vB)$, $\Lambda'(y;\vB)$ and $\Lambda''(y;\vB)$
		as $\Lambda$, $\Lambda'$, and $\Lambda''$, respectively.
		
		First, we show that $\delta'(y)$ is positive for $0\le y<1$.
		Differentiating the expression in \eqref{eq:delta}, we have that $\delta'(y)>0$ is equivalent to
		\begin{equation}\label{eq:delta-der}
			\Lambda(\Lambda'+y\Lambda'')-y(\Lambda')^2 >0.
		\end{equation}

		Recall that \eqref{eq:dual} is a convex minimization problem.
		Hence, the second order derivative of the objective function is always positive.
		In other words, 
		\begin{equation*}
			\frac{\delta}{y^2}+\frac{\Lambda''\Lambda-(\Lambda')^2}{\Lambda^2}>0.
		\end{equation*}
		Substituting $\delta$ with $y\Lambda'/\Lambda$ and multiplying by $y\Lambda^2$, 
		we obtain \eqref{eq:delta-der} as desired.
		
		Next, we show that $\rho$ is monotone decreasing. 
		Recall that $\rho(y)=\rgv(\delta(y))=\capy(\cS)-\rateT(\delta)$.
		Since $\rateT(\delta)$ yields the asymptotic rate of the total ball size,
		we have that as $y$ increases, $\delta(y)$ increases and so,  $\rateT(\delta)$ increases.
		Therefore, $\rho(y)$  decreases, as desired.
		
		Next, we show that $\rho(1)=0$. 
		When $y=1$, we have from \eqref{eq:rho} that $\rho(1)=2\capy(\cS) - \log \Lambda(1;\vB)$.
		Now, recall that $\vB_{\cG\times\cG}(y)$ shares the same dominant eigenvalue as the matrix $\vT_{\cG\times\cG}(y)$ \cite{Kolesnik1991}. 
		Furthermore it can be verified that, when $y=1$, $\vT_{\cG\times\cG}(1)$ is tensor product of $\vA_\cG$ and $\vA_\cG$. 
		That is, $\vT_{\cG\times\cG}(1)=\vA_\cG\otimes \vA_\cG$.
		It then follows from standard linear algebra that $\Lambda(1;\vB)=\Lambda(1;\vT)=\Lambda(\vA_\cG)^2$. 
		Thus, $\log \Lambda(1;\vB) = 2\capy(\cS)$ and $\rho(1)=0$.
		In this instance, we also have that $\rateT(\delta_{\rm max})=2\capy(\cS)$.
		
		Finally, for $\delta\ge\delta_{\rm max}$, we have that $\rateT(\delta_{\rm max})=2\capy(\cS)$ and thus, $\rgv(\delta)=0$, as required.
	\end{IEEEproof}
	
	Theorem~\ref{thm:curve} is then immediate from Lemmas~\ref{lem:rgv-rho} and~\ref{lem:delta-rho}. 
	
	We have the following corollary that immediately follows from Lemma~\ref{lem:delta-rho}. This corollary then implies that $y^*$ yields the global minimum for the optimization problem.
	
	\begin{corollary}\label{cor:uniquesol}
		When $0 \le \delta \le \delta_{max} = \frac{\Lambda'(1,\vB)}{\Lambda(1,\vB)}$, $F(y) \triangleq y\Lambda'(y;\vB) -\delta\Lambda(y;\vB)$ has a unique zero in $\interval{0,1}$. Furthermore, $F'(y)$ is strictly positive for all $y \in \interval{0,1}$.
	\end{corollary}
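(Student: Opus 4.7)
The plan is to derive both claims from Lemma~\ref{lem:delta-rho} by factoring $F$ through the function $\delta(y)$ introduced in \eqref{eq:delta}. Since $\vB_{\cG\times\cG}(y)$ is a non-negative primitive matrix for every $y \in \interval{0,1}$, its Perron eigenvalue $\Lambda(y;\vB)$ is strictly positive, so we may rewrite
\[
F(y) \;=\; \Lambda(y;\vB)\,\big(\delta(y) - \delta\big),
\]
and therefore $F(y) = 0$ if and only if $\delta(y) = \delta$.

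For the uniqueness assertion I would invoke Lemma~\ref{lem:delta-rho} directly: $\delta(\cdot)$ is continuous and strictly increasing on $\interval{0,1}$ with $\delta(0) = 0$ and $\delta(1) = \delta_{\rm max}$. Hence for every $\delta \in \interval{0, \delta_{\rm max}}$, the intermediate value theorem yields exactly one $y^* \in \interval{0,1}$ with $\delta(y^*) = \delta$, and this $y^*$ is the unique zero of $F$ in $\interval{0,1}$.

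For strict positivity of $F'(y) = (1-\delta)\Lambda'(y;\vB) + y\Lambda''(y;\vB)$, I would reuse the inequality \eqref{eq:delta-der} that already appears in the proof of Lemma~\ref{lem:delta-rho}. Dividing it by $\Lambda > 0$ gives $\Lambda' + y\Lambda'' > y(\Lambda')^2/\Lambda$, and substituting this bound into $F'$ yields
\[
F'(y) \;>\; \Lambda'(y;\vB)\,\big(\delta(y) - \delta\big).
\]
Since the Perron eigenvalue is non-decreasing in the non-negative parameter $y$, $\Lambda'(y;\vB) \ge 0$, so the right-hand side is non-negative whenever $y \ge y^*$, giving $F'(y) > 0$ on that half.

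The main obstacle will be the branch $y < y^*$, where the lower bound just derived has the wrong sign. I plan to close this gap by appealing to the strict convexity of the dual objective $g(y) \triangleq -\delta \log y + \log \Lambda(y;\vB)$ underlying \eqref{eq:dual}: the identity $F(y) = y\Lambda(y;\vB)\,g'(y)$ allows one to expand $F'(y) = (\Lambda + y\Lambda')\,g'(y) + y\Lambda\,g''(y)$ and then combine $g''(y) > 0$ with the sign of $g'$ (negative on $\interval{0,y^*}$, positive on $\interval{y^*,1}$) together with $\Lambda,\Lambda' \ge 0$ to conclude $F'(y) > 0$ throughout $\interval{0,1}$.
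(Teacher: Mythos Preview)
Your uniqueness argument is clean and is exactly what the paper intends: the factoring $F(y)=\Lambda(y;\vB)\big(\delta(y)-\delta\big)$ together with the monotonicity of $\delta(\cdot)$ from Lemma~\ref{lem:delta-rho} gives the unique zero immediately. Your argument for $F'(y)>0$ on $[y^*,1]$ via the bound $F'(y)>\Lambda'(y;\vB)\big(\delta(y)-\delta\big)$ is also fine.

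The gap is on the branch $y<y^*$, and your proposed fix does not close it. In the decomposition
\[
F'(y)=(\Lambda+y\Lambda')\,g'(y)+y\Lambda\,g''(y),
\]
you correctly note $g'(y)<0$ and $g''(y)>0$ there; but since $\Lambda+y\Lambda'>0$, the first summand is strictly \emph{negative} while the second is strictly positive. Sign information alone cannot decide the sign of the sum, and in fact expanding both terms just collapses back to $(1-\delta)\Lambda'+y\Lambda''$ with no new leverage. Concretely, on $(0,y^*)$ your two displayed lower bounds read $F'(y)>\Lambda'(\delta(y)-\delta)$ (negative right-hand side) and $F'(y)=(\text{negative})+(\text{positive})$; neither yields $F'>0$. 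So as written the plan does not establish strict positivity of $F'$ on all of $[0,1]$; you would need an additional ingredient---for instance, exploiting the specific structure of $\vB_{\cG\times\cG}(y)$ (entries linear in $y$) and the log-convexity of the Perron eigenvalue in $\log y$, or else arguing directly that $(1-\delta)\Lambda'+y\Lambda''>0$ using $\delta\le\delta_{\max}$ together with a sharper relation between $\Lambda'$ and $\Lambda''$ than \eqref{eq:delta-der} provides.
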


\section{Evaluating Marcus and Roth's Improvement of the the Gilbert-Varshamov Bound}
\label{sec:mr}

In \cite{MarcusRoth1992}, Marcus and Roth improve the GV lower bound for most constrained systems by considering subsets $\cS(p)$ of $\cS$ where $p$ is some parameter.
Here, we focus on the case $s=1$ and set $p$ to be the normalized frequency of edges whose labels correspond to one.
Specifically, we set $\cS(p) \triangleq \{\vx\in \cS: {\rm wt}(\vx) = \floor{p|\vx|}\}$. 

Next, let $\cS_{n}(p)$ to be the set of all words/paths of length $n$ in $\cS(p)$ and
we define $S(p) \triangleq \limsup_{n\to\infty} \frac1n\log |\cS_n(p)|$.

Similar to before, we define $\rateT(p,\delta) = \limsup_{n\to\infty} \frac1n{\log T(\floor{\delta n},n;\cS_n(p))}$.
Since $\cS_n(p)$ is a subset of $\cS_n$, it follows the usual GV argument that there exists a family of $(n,\lfloor \delta n \rfloor; \cS)$-codes whose rates approach $2S(p)-\rateT(p,\delta)$ for all $0\le p\le 1$. 
Therefore, we have the following lower bound on asymptotic achievable code rates: 
\begin{equation}\label{eq:MR}
	\rmr(\delta) = \sup \{2S(p)-\rateT(p,\delta) : 0\le p\le 1\}\,.
\end{equation}	

Now, a key result from \cite{MarcusRoth1992} is that both $S(p)$ and $\rateT(p,\delta)$ can be obtained via two different convex optimization problems. For succinctness, we state the {\em dual} formulations of these optimization problems.

First, $S(p)$ can be obtained from the following problem.
\begin{equation}\label{eq:Sp}
	S(p)=\inf\left\{-p \log z + \log \Lambda(\vC_{\cG}(z)) : 
	z\ge 0  \right\}.
\end{equation}				

Here, $\vC_\cG(z)$ is the following $(|\cV| \times |\cV|)$-matrix $\vC_{\cG}(z)$ whose rows and columns are indexed by $\cV$. 
For each entry indexed by $e$, we set $(\vC_{\cG}(z))_e$  to be zero if $e \notin \cE$, and $z^{\cL(e)}$ if $e \in \cE$.

As before, we simplify notation by writing $\Lambda(z;\vC)\triangleq \Lambda(\vC_{\cG}(z))$. 
Again, following the convexity of \eqref{eq:Sp},
we are interested in finding the zero of the following function.

\begin{equation}\label{eq:constraint1}
	G_1(z) \triangleq z\Lambda'(z;\vC)-p\Lambda(z;\vC).
\end{equation}

Next, $\rateT(p,\delta)$ can be obtained via the following optimization.
\begin{equation}
\rateT(p,\delta)  = \inf\big\{-2p \log x -\delta \log y +\log \Lambda(\vD_{\cG\times \cG}(x,y)) : ~x\ge 0,~0\le y\le 1  \big\}. \label{eq:Tp}
\end{equation}
Here, $\vD_{\cG\times \cG}(x,y)$ is a $\binom{|\cV|+1}{2} \times \binom{|\cV|+1}{2}$-reduced distance matrix indexed by $\cV^{(2)}$. 
To define the entry of matrix $\vD_{\cG\times \cG}(x,y)$ indexed by $((v_i,v_j),(v_k,v_\ell))$, we look at the vertices $v_i$, $v_j$, $v_k$ and $v_\ell$ and follow the rules given in Table~\ref{table:DGG}.

\begin{table*}[t]
	\begin{center}
		\ra{1.5}
		\begin{tabular}{c ccccc}
			\hline\hline
			$\vD_{\cG\times \cG}(x,y)$ at entry $\left((v_i,v_j),(v_k,v_\ell)\right)$
			&
			\multicolumn{5}{c}{Subgraph induced by the states $\{v_i,v_j,v_k,v_\ell\}$}
			\\ \hline \hline
			$0$ &
			\begin{tikzpicture}[scale=.7, transform shape]
				\tikzstyle{every node} = [draw,shape = circle]
				\node (i) at (0, 0) {$v_i$};
				\node (j) at (0, -1.5) {$v_j$};
				\node (k) at (1.5, 0) {$v_k$};
				\node (l) at (1.5, -1.5) {$v_\ell$};
			\end{tikzpicture} &
			\begin{tikzpicture}[scale=.7, transform shape]
				\tikzstyle{every node} = [draw,shape = circle]
				\node (i) at (0, 0) {$v_i$};
				\node (j) at (0, -1.5) {$v_j$};
				\node (k) at (1.5, 0) {$v_k$};
				\node (l) at (1.5, -1.5) {$v_\ell$};
				\draw [->] (i) -- (k);
			\end{tikzpicture} &
			\begin{tikzpicture}[scale=.7, transform shape]
				\tikzstyle{every node} = [draw,shape = circle]
				\node (i) at (0, 0) {$v_i$};
				\node (j) at (0, -1.5) {$v_j$};
				\node (k) at (1.5, 0) {$v_k$};
				\node (l) at (1.5, -1.5) {$v_\ell$};
				\draw [->] (i) -- (l);
			\end{tikzpicture} &
			\begin{tikzpicture}[scale=.7, transform shape]
				\tikzstyle{every node} = [draw,shape = circle]
				\node (i) at (0, 0) {$v_i$};
				\node (j) at (0, -1.5) {$v_j$};
				\node (k) at (1.5, 0) {$v_k$};
				\node (l) at (1.5, -1.5) {$v_\ell$};
				\draw [->] (j) -- (k);
			\end{tikzpicture} &
			\begin{tikzpicture}[scale=.7, transform shape]
				\tikzstyle{every node} = [draw,shape = circle]
				\node (i) at (0, 0) {$v_i$};
				\node (j) at (0, -1.5) {$v_j$};
				\node (k) at (1.5, 0) {$v_k$};
				\node (l) at (1.5, -1.5) {$v_\ell$};
				\draw [->] (j) -- (l);
			\end{tikzpicture} \\ \hline
			$1$ &
			\begin{tikzpicture}[scale=.7, transform shape]
				\tikzstyle{every node} = [draw,shape = circle]
				\node (i) at (0, 0) {$v_i$};
				\node (j) at (0, -1.5) {$v_j$};
				\node (k) at (1.5, 0) {$v_k$};
				\node (l) at (1.5, -1.5) {$v_\ell$};
				\draw [->] (i) -- (k) node[pos = 0.5,above,draw = none] {$0$};
				\draw [->] (j) -- (l) node[pos = 0.5,above,draw = none] {$0$};
			\end{tikzpicture} &
			\begin{tikzpicture}[scale=.7, transform shape]
				\tikzstyle{every node} = [draw,shape = circle]
				\node (i) at (0, 0) {$v_i$};
				\node (j) at (0, -1.5) {$v_j$};
				\node (k) at (1.5, 0) {$v_k$};
				\node (l) at (1.5, -1.5) {$v_\ell$};
				\draw [->] (i) -- (l) node[pos = 0.3,above,draw = none] {$0$};
				\draw [->] (j) -- (k) node[pos = 0.1,above,draw = none] {$0$};
			\end{tikzpicture} &
			\begin{tikzpicture}[scale=.7, transform shape]
				\tikzstyle{every node} = [draw,shape = circle]
				\node (i) at (0, 0) {$v_i$};
				\node (j) at (0, -1.5) {$v_j$};
				\node (k) at (1.5, -0.75) {$v_k$};
				\draw [->] (i) -- (k) node[pos = 0.5,above,draw = none] {$0$};
				\draw [->] (j) -- (k) node[pos = 0.5,above,draw = none] {$0$};
			\end{tikzpicture} \\ \hline
			$x^2$ &
			\begin{tikzpicture}[scale=.7, transform shape]
				\tikzstyle{every node} = [draw,shape = circle]
				\node (i) at (0, 0) {$v_i$};
				\node (j) at (0, -1.5) {$v_j$};
				\node (k) at (1.5, 0) {$v_k$};
				\node (l) at (1.5, -1.5) {$v_\ell$};
				\draw [->] (i) -- (k) node[pos = 0.5,above,draw = none] {$1$};
				\draw [->] (j) -- (l) node[pos = 0.5,above,draw = none] {$1$};
			\end{tikzpicture} &
			\begin{tikzpicture}[scale=.7, transform shape]
				\tikzstyle{every node} = [draw,shape = circle]
				\node (i) at (0, 0) {$v_i$};
				\node (j) at (0, -1.5) {$v_j$};
				\node (k) at (1.5, 0) {$v_k$};
				\node (l) at (1.5, -1.5) {$v_\ell$};
				\draw [->] (i) -- (l) node[pos = 0.3,above,draw = none] {$1$};
				\draw [->] (j) -- (k) node[pos = 0.1,above,draw = none] {$1$};
			\end{tikzpicture} &
			\begin{tikzpicture}[scale=.7, transform shape]
				\tikzstyle{every node} = [draw,shape = circle]
				\node (i) at (0, 0) {$v_i$};
				\node (j) at (0, -1.5) {$v_j$};
				\node (k) at (1.5, -0.75) {$v_k$};
				\draw [->] (i) -- (k) node[pos = 0.5,above,draw = none] {$1$};
				\draw [->] (j) -- (k) node[pos = 0.5,above,draw = none] {$1$};
			\end{tikzpicture} \\ \hline
			$xy$ &
			\begin{tikzpicture}[scale=.7, transform shape]
				\tikzstyle{every node} = [draw,shape = circle]
				\node (i) at (0, 0) {$v_i$};
				\node (j) at (0, -1.5) {$v_j$};
				\node (k) at (1.5, 0) {$v_k$};
				\node (l) at (1.5, -1.5) {$v_\ell$};
				\draw [->] (i) -- (k) node[pos = 0.5,above,draw = none] {$\sigma$};
				\draw [->] (j) -- (l) node[pos = 0.5,above,draw = none] {$\bar{\sigma}$};
			\end{tikzpicture} &
			\begin{tikzpicture}[scale=.7, transform shape]
				\tikzstyle{every node} = [draw,shape = circle]
				\node (i) at (0, 0) {$v_i$};
				\node (j) at (0, -1.5) {$v_j$};
				\node (k) at (1.5, 0) {$v_k$};
				\node (l) at (1.5, -1.5) {$v_\ell$};
				\draw [->] (i) -- (l) node[pos = 0.3,above,draw = none] {$\sigma$};
				\draw [->] (j) -- (k) node[pos = 0.1,above,draw = none] {$\bar{\sigma}$};
			\end{tikzpicture}&
			\begin{tikzpicture}[scale=.7, transform shape]
				\tikzstyle{every node} = [draw,shape = circle]
				\node (i) at (0, 0) {$v_i$};
				\node (j) at (0, -1.5) {$v_j$};
				\node (k) at (1.5, -0.75) {$v_k$};
				\draw [->] (i) -- (k) node[pos = 0.5,above,draw = none] {$\sigma$};
				\draw [->] (j) -- (k) node[pos = 0.5,above,draw = none] {$\bar{\sigma}$};
			\end{tikzpicture}\\ \hline
			$2xy$ &
			\begin{tikzpicture}[scale=.7, transform shape]
				\tikzstyle{every node} = [draw,shape = circle]
				\node (k) at (0, 0) {$v_k$};
				\node (l) at (0, -1.5) {$v_\ell$};
				\node (i) at (-1.5, -0.75) {$v_i$};
				\draw [->] (i) -- (k) node[pos = 0.5,above,draw = none] {$\sigma$};
				\draw [->] (i) -- (l) node[pos = 0.5,above,draw = none] {$\bar{\sigma}$};
			\end{tikzpicture}  \\ \hline
		\end{tabular}
	\end{center}
	\caption{We set the $\left((v_i,v_j),(v_k,v_\ell)\right)$-entry of the matrix $\vD_{\cG\times \cG}(x,y)$ according to subgraph induced by the states $v_i$,$v_j$,$v_k$, and $v_\ell$. }
	\label{table:DGG}
\end{table*}

Again, we write $\Lambda(x,y;\vD)\triangleq \Lambda(\vD_{\cG\times \cG}(x,y))$. 
Also, following the convexity of \eqref{eq:Tp}, 
we have that if the optimal solution is obtained at $x$ and $y$,
then 
\begin{align}
	G_2(x,y) & \triangleq x\Lambda_{x}(x,y;\vD)-2p\Lambda(x,y;\vD)=0. \label{constraint2}\\
	G_3(x,y) & \triangleq y\Lambda_{y}(x,y;\vD)-\delta\Lambda(x,y;\vD)=0.\label{constraint3}
\end{align}

To this end, we consider the function $\Delta(x) = \Lambda_y(x,1;\vD)/\Lambda(x,1;\vD)$ for $x>0$ and set $\delta_{\max} = \sup\{\Delta(x): x>0\}$. As with the previous section, we develop a numerical procedure to solve the optimization problem \eqref{eq:MR}.
To this end, we have the following critical observation. 

\begin{theorem}\label{thm:lagrange}
	For given $\delta < \delta_{\rm max}$, consider the optimization problem:
	\begin{align*}
		&	\sup\Big\{-2p\log z + 2\log \Lambda(z;\vC) + 2p\log x + \delta \log y - \log \Lambda(x,y;\vD) :\\
		&\hspace{90mm} G_1(z) = G_2(x,y) = G_3(x,y) = 0\Big\}\,.
	\end{align*}
	If $(p^*,x^*,y^*,z^*)$ is an optimal solution, then $x^* = z^*$. Furthermore, if $0 \le p^* \le 1$, then $x^*,z^* \ge 0$ and $0 \le y^* \le 1$. 
\end{theorem}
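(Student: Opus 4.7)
The plan is to attack the problem via Lagrange multipliers on the four variables $(p,x,y,z)$ subject to the three equality constraints $G_1=0$, $G_2=0$, $G_3=0$. The guiding observation is that these constraints are precisely the first-order stationarity conditions for the inner dual problems \eqref{eq:Sp} and \eqref{eq:Tp}. Writing the objective as $F(p,x,y,z)$, a short computation shows that on the feasible set the partials $\partial F/\partial z$, $\partial F/\partial x$, $\partial F/\partial y$ all vanish. For instance, $G_1(z)=0$ gives $\Lambda'(z;\vC)/\Lambda(z;\vC) = p/z$, which kills $\partial F/\partial z = -2p/z + 2\Lambda'(z;\vC)/\Lambda(z;\vC)$; the analogous substitutions $\Lambda_x/\Lambda = 2p/x$ and $\Lambda_y/\Lambda = \delta/y$ handle the other two partials.

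Forming the Lagrangian $L = F + \lambda_1 G_1 + \lambda_2 G_2 + \lambda_3 G_3$, the stationarity condition $\partial L/\partial z = 0$ collapses to $\lambda_1 G_1'(z) = 0$. A direct adaptation of Corollary~\ref{cor:uniquesol} (with $\vC$ in place of $\vB$) yields $G_1'(z) > 0$, so $\lambda_1 = 0$. The pair of conditions $\partial L/\partial x = 0$ and $\partial L/\partial y = 0$ becomes a $2\times 2$ homogeneous linear system $M\,(\lambda_2,\lambda_3)^T = 0$. A direct computation, using $\Lambda_x/\Lambda = 2p/x$ and $\Lambda_y/\Lambda = \delta/y$, identifies
\begin{equation*}
	M \;=\; \Lambda(x,y;\vD)\,\mathrm{Hess}(\phi_2)\,\mathrm{diag}(x,y),
\end{equation*}
where $\phi_2(x,y) \triangleq -2p\log x - \delta\log y + \log\Lambda(x,y;\vD)$ is the objective of \eqref{eq:Tp}. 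Since $\phi_2$ is strictly convex at its unique minimizer, $\mathrm{Hess}(\phi_2)$ is positive definite, so $M$ is nonsingular and $\lambda_2 = \lambda_3 = 0$.

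With all three multipliers equal to zero, the remaining stationarity $\partial L/\partial p = 0$ reduces to $2\log(x^*/z^*) = 0$, yielding the identity $x^* = z^*$. For the range assertions, once $0 \le p^* \le 1$ is assumed, the constraint $G_1(z^*) = 0$ together with the domain $z \ge 0$ of \eqref{eq:Sp} places $z^*$ in $[0,\infty)$ (and strictly positive whenever $p^* > 0$), and likewise $G_2 = G_3 = 0$ together with the domain $\{x\ge 0,\,0\le y\le 1\}$ of \eqref{eq:Tp} and the hypothesis $\delta < \delta_{\max}$ place $x^*$ in $[0,\infty)$ and $y^*$ in $[0,1]$.

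The main obstacle I anticipate is the positive-definiteness of $\mathrm{Hess}(\phi_2)$ in the second step: this is the mechanism that forces $\lambda_2 = \lambda_3 = 0$ and thereby drives the entire argument. While Corollary~\ref{cor:uniquesol} delivers the univariate convexity for $\vB$, the present bivariate setting requires either a two-variable extension of that Hessian calculation or, more conceptually, an appeal to the strict concavity of the primal problem underlying \eqref{eq:Tp} (a maximum-entropy problem on edges of $\cG\times\cG$ with two linear constraints). The upper bound $y^* \le 1$ under $\delta < \delta_{\max}$ is a secondary technicality, handled by examining how $\Delta(x) = \Lambda_y(x,1;\vD)/\Lambda(x,1;\vD)$ behaves at $x = x^*$.
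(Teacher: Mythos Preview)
Your proposal is correct and follows the same Lagrange-multiplier approach as the paper: form $L=F+\lambda_1G_1+\lambda_2G_2+\lambda_3G_3$, set the four partials to zero, deduce $\lambda_1=\lambda_2=\lambda_3=0$, and read off $x^*=z^*$ from $\partial L/\partial p=0$. The paper simply asserts that ``solving these equations'' yields vanishing multipliers, whereas you supply the mechanism explicitly (the identification $M=\Lambda\cdot\mathrm{Hess}(\phi_2)\cdot\mathrm{diag}(x,y)$ and the appeal to strict convexity of the dual objective), so your write-up is in fact more complete than the paper's sketch; for the range assertions the paper argues via the monotonicity of $z\mapsto z\Lambda'(z;\vC)/\Lambda(z;\vC)$ and $y\mapsto y\Lambda_y(x^*,y;\vD)/\Lambda(x^*,y;\vD)$ in the spirit of Lemma~\ref{lem:delta-rho}, which is a slightly cleaner way to pin down $z^*\ge 0$ and $y^*\in[0,1]$ than invoking the domain of \eqref{eq:Sp} and \eqref{eq:Tp} directly.
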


\begin{proof}
	Let $\lambda_1, \lambda_2, \lambda_3$ be real-valued variables and 
	define $L(p,x,y,z,\lambda_1, \lambda_2, \lambda_3) \triangleq G(p,x,y,z) + \lambda_1G_1(z) + \lambda_2G_2(x,y) + \lambda_3G_3(x,y)$. 
	Using  Lagrangian multiplier theorem, we have that $\partial L/\partial p = \partial L/\partial x = \partial L/\partial y = \partial L/\partial z = 0$ for any optimal solution.
	Solving these equations with the constraints $G_1(z) = G_2(x,y) = G_3(x,y) = 0$, we have that $\lambda_1 = \lambda_2 = \lambda_3 = 0$ and $x=z$ for any optimal solution.
	
	Now, when $p^*\in [0,1]$, using $G_1(z)=0$, let us define $z(p)\triangleq  z\Lambda'(z;\vC)/\Lambda(z;\vC)$. Then proceeding as with the proof of Lemma~\ref{lem:delta-rho}, we see that $z(p)$ is monotone increasing with $z(0)=0$. Therefore, $z^*=z(p^*)$ is zero.
	
	Similarly, given $p^*$ and $x^*$, we use $G_3(x^*,y)=0$ to define $\delta(y) = y\Lambda_{y}(x^*,y;\vD)/\Lambda(x^*,y;\vD)$.
	Again, we can proceeding as with the proof of Lemma~\ref{lem:delta-rho} to show that $\delta(y)$ is monotone increasing. Also, since $\delta(y^*)<\delta_{\rm max}=\delta(1)$, we have that $y^*\in [0,1]$. 
\end{proof}
 
Therefore, to determine $\rmr(\delta)$ for any fixed $\delta$, it suffices to find 
$x$, $y$, $z$, $p$ such that $G_1(z) = G_2(x,y) = G_3(x,y) = 0$ and $x=z$. 

Now, the optimization in Theorem~\ref{thm:lagrange} does not constrain the values of $p$.
Furthermore, for certain constrained systems, there are instances where $p$ falls outside the interval $[0,1]$.
In this case, instead of solving the optimization problem \eqref{eq:MR}, we set $p$ to be either zero or one, and we solve the corresponding optimization problems~\eqref{eq:Sp} and~\eqref{eq:Tp}.
Specifically, if we have $p^* < 0$, then we set $p^* = 0$ and $x^* = 0$, or if $p^* > 1$, then we set $p^* = 1$ and $x^* = \infty$.
Hence, the resulting rates we obtain is a {\em lower bound} for the GV-MR bound.
\vspace{2mm}

\noindent{\bf Procedure 2 \Big($\rmr(\delta)$ for fixed $\delta \le \delta_{\rm max}$\Big)}.

\noindent{\sc Input}: 
Matrices $\vC_{\cG}(x)$, $\vD_{\cG}(x,y)$\\[1mm]
\noindent{\sc Output}: $\rmr(\delta)$ or $R_{\rm LB}(\delta)$, where $\rmr(\delta)\ge R_{\rm LB}(\delta)$.
\begin{enumerate}[(1)]
	\item Apply the Newton-Raphson method to obtain $p^*,x^*,y^*$ such that $G_1(x^*)$, $G_2(x^*,y^*)$ and $G_3(x^*,y^*)$ are approximately zero. 
	Specifically, we do the following.
	\begin{itemize}
		\item Fix some tolerance value $\epsilon$
		\item Set $t = 0$ and pick some initial guess $p_t\ge 0$, $x_t\ge 0$, $0 \le y_t \le 1$.
		\item While $|p_{t}-p_{t-1}|+ |x_{t}-x_{t-1}| +|y_{t}-y_{t-1}|>\epsilon$~:
		\begin{itemize}
			\item Compute the next guess $p_{t+1},x_{t+1},y_{t+1}$:
			\begin{equation*}
				\begin{bmatrix}
					p_{t+1}\\
					x_{t+1}\\
					y_{t+1} \\
				\end{bmatrix}
				=
				\begin{bmatrix}
					p_{t}\\
					x_{t}\\
					y_{t} \\
				\end{bmatrix}
				-
				\begin{bmatrix}
					\frac{\partial G_1}{\partial p} & 	\frac{\partial G_1}{\partial x} & 	\frac{\partial G_1}{\partial y}\\
					\frac{\partial G_2}{\partial p} & 	\frac{\partial G_2}{\partial x} & 	\frac{\partial G_2}{\partial y}\\
					\frac{\partial G_3}{\partial p} & 	\frac{\partial G_3}{\partial x} & 	\frac{\partial G_3}{\partial y}\\
				\end{bmatrix}^{-1}
				\begin{bmatrix}
					G_1(x_t)\\
					G_2(x_t,y_t)\\
					G_3(x_t,y_t) \\
				\end{bmatrix}\,.
			\end{equation*}
			\item Here, we apply the power iteration method to compute 
			$\Lambda(x_t;\vC)$, $\Lambda'(x_t;\vC)$, $ \Lambda''(x_t;\vC)$, $\Lambda(x_t,y_t;\vD)$,
			$\Lambda_x(x_t,y_t;\vD)$, $\Lambda_y(x_t,y_t;\vD)$, $\Lambda_{xx}(x_t,y_t;\vD)$, $\Lambda_{yy}(x_t,y_t;\vD)$, and $\Lambda_{xy}(x_t,y_t;\vD)$.
			\item Increment $t$ by one.
		\end{itemize}
		\item Set $p^*\gets p_t$, $x^*\gets x_t$, $y^*\gets y_t$.
	\end{itemize}

	\item[(2A)] If $ 0 \le p^* \le 1$, set {$\rmr(\delta) \gets  2\log \Lambda(x^*;\vC) + \delta \log y^* - \log \Lambda(x^*,y^*;\vD)$}.
	\item[(2B)] Otherwise,
		\begin{itemize}
			\item if $p^* < 0$, set $p^* \gets 0$, $x^* \gets 0$, and $y^* \gets \text{solution of }G_3(0,y)=0$.
			\item if $p^* > 1$, set $p^* \gets 1$, $x^* \gets \infty$, and $y^* \gets \text{solution of }G_3(\infty,y)=0$.
		\end{itemize}
		Finally, we set {$R_{\rm LB}(\delta) \gets  2\log \Lambda(x^*;\vC) + \delta \log y^* - \log \Lambda(x^*,y^*;\vD)$}.

\end{enumerate}	

\begin{remark}Let $p^*$ be the value computed at Step 1. When $p^*$ falls outside the interval $[0,1]$, we set $p^*\in\{0,1\}$ and we argued earlier that the value returned $R_{\rm LB}(\delta)$ (at Step 2B) is at most $\rmr(\delta)$. Nevertheless, we {\em conjecture} that $R_{\rm LB}(\delta)=\rmr(\delta)$.
\end{remark}

As before, we develop a plotting procedure that minimizes the use of Newton-Raphson iterations.

Note that we have three scenarios for $\Delta(x)$. If $\Delta(x)$ is monotone decreasing, then $\delta_{\max} = \lim_{x\to 0} \Delta(x)$ and we set $x^\#=0$.
If $\Delta(x)$ is monotone increasing, then $\delta_{\max} = \lim_{x\to \infty} \Delta(x)$ and we set $x^\#=\infty$.
Otherwise, $\Delta(x)$ is maximized for some positive value and we set $x^\#$ to be this value. Next, to obtain the GV-MR curve (see Remark~\ref{rem:curveGVMR}), we iterate over $x \in \interval{1,x^\#}$. Note that if $y(x^\#) < 1$ or equivalently $\delta(x^\#) < \delta_{\rm max}$, we obtain a lower bound on the GV-MR curve by iterating over $y \in \interval{y(x^\#),1}$. Similar to Theorem~\ref{thm:curve}, we define
 
\begin{equation}\label{eq:GV-MR}
\rho_{\rm MR}(x) \triangleq 2\log \Lambda(x;\vC) + \delta(x) \log y(x) - \log \Lambda(x,y(x);\vD)\,, 
\end{equation}

and

\begin{equation}\label{eq:GV-LB}
	\rho_{\rm LB}(y) \triangleq 2\log \Lambda(x^\#;\vC) + \delta(y) \log y - \log \Lambda(x^\#,y;\vD)\,.
\end{equation}	

Finally, we state the following analogue of Theorem~\ref{thm:curve}.

\begin{theorem}\label{thm:curve-MR}
	Define $\delta_{\max}$, $x^\#$ as before.
	For $x \in \interval{1,x^\#}$, we set
	\begin{align*}
		p(x) & \gets x\Lambda'(x;\vC)/\Lambda(x;\vC),\\
		y(x) & \gets \text{solution of $G_2(x,y)=0$},\\
		\delta(x) & \gets  y(x)\Lambda_y(x,y(x);\vD)/\Lambda(x,y(x);\vD),
	\end{align*}
	
If $y(x^\#) < 1$, then for $y \in \interval{y(x^\#),1}$, we set
	\begin{align*}
		\delta(y) & \gets  y\Lambda_y(x^\#,y;\vD)/\Lambda(x^\#,y;\vD)\,,
	\end{align*}

then the corresponding GV-MR curve is given by 
	\begin{equation}\label{eq:MRcurve}
		\big\{(\delta(x), \rho_{\rm MR}(x)) : x\in \interval{1,x^\#} \big\}
		\cup \{(\delta(y),\rho_{\rm LB}(y)) : y\in \interval{y(x^\#),1}\}
		\cup \big\{ (\delta, 0) : \delta\ge \delta_{\rm max} \big\}\,. 
	\end{equation}

where $\rho_{\rm MR}$ and $\rho_{\rm LB}$ are defined in \eqref{eq:GV-MR} and \eqref{eq:GV-LB} respectively.
\end{theorem}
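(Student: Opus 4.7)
The plan is to adapt the strategy of Theorem~\ref{thm:curve} to the bivariate setting, decomposing the right-hand side of~\eqref{eq:MRcurve} into its three pieces and treating each by analogues of Lemmas~\ref{lem:rgv-rho} and~\ref{lem:delta-rho}.

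For the main piece $x \in \interval{1, x^\#}$, I would start from Theorem~\ref{thm:lagrange}: every KKT-optimal tuple $(p^*, x^*, y^*, z^*)$ with $p^* \in [0,1]$ satisfies $z^* = x^*$ and $G_1 = G_2 = G_3 = 0$. Using $x$ as the parameter, $p(x)$ is determined by $G_1 = 0$, $y(x)$ by $G_2(x, y) = 0$ with $p(x)$ substituted, and $\delta(x)$ by $G_3(x, y(x)) = 0$. Substituting the dual formulas \eqref{eq:Sp} and \eqref{eq:Tp} at their optimizers into $\rmr(\delta(x)) = 2S(p(x)) - \rateT(p(x), \delta(x))$, the $\pm 2p(x)\log x$ contributions cancel (precisely because $z = x$), yielding the analogue of Lemma~\ref{lem:rgv-rho}:
\[
\rmr(\delta(x)) = 2\log\Lambda(x;\vC) + \delta(x)\log y(x) - \log\Lambda(x, y(x);\vD) = \rho_{\rm MR}(x).
\]
Monotonicity of $\delta(x)$ on $\interval{1, x^\#}$ would follow from implicit differentiation combined with positive definiteness of the Hessian of the convex objective in \eqref{eq:Tp}, mirroring the computation producing~\eqref{eq:delta-der}. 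For the left endpoint, at $x = 1$ a direct comparison of Tables~\ref{table:BGG} and~\ref{table:DGG} gives $\vC(1) = \vA_\cG$ and $\vD(1, y) = \vB(y)$; the known value $\rmr(0) = \capy(\cS)$, achieved at $y^* = 0$, together with uniqueness of the parametrization, forces $y(1) = 0$ and hence $\delta(1) = 0$.

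For the region $\delta \ge \delta_{\rm max}$ I would again parallel Lemma~\ref{lem:delta-rho}: the identity $\vD(x, 1) = \vC(x) \otimes \vC(x)$ gives $\Lambda(x, 1;\vD) = \Lambda(x;\vC)^2$, so $\rateT(p, \delta_{\rm max}) = 2S(p)$ and hence $\rmr(\delta) = 0$ for $\delta \ge \delta_{\rm max}$. For the middle piece $\{(\delta(y), \rho_{\rm LB}(y)) : y \in \interval{y(x^\#), 1}\}$, applicable when $y(x^\#) < 1$, the idea is that once $\delta > \delta(x^\#)$ the unconstrained KKT optimum pushes $p$ outside $[0,1]$, so following Step~2B of Procedure~2 we fix $x = x^\#$ and vary $y$. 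The same cancellation of $p$-dependent terms delivers $\rho_{\rm LB}(y)$, and a monotonicity argument in $y$ alone analogous to Lemma~\ref{lem:delta-rho}, together with $\rho_{\rm LB}(1) = 2\log\Lambda(x^\#;\vC) - \log\Lambda(x^\#, 1;\vD) = 0$, connects $(\delta(x^\#), \rho_{\rm MR}(x^\#))$ smoothly to $(\delta_{\rm max}, 0)$.

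The main obstacle is the middle piece. As flagged in the remark following Procedure~2, it is only conjectured that $\rho_{\rm LB}$ coincides with $\rmr$ on this range, so~\eqref{eq:MRcurve} is most naturally read as a description of the set produced by the plotting procedure: the first and third pieces match $\rmr$ exactly, while the middle piece is a valid lower bound. Rigorously promoting this lower bound to an equality would require ruling out more favorable feasible choices than $x = x^\#$ and is not resolved here.
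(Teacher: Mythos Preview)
Your proposal is correct and matches the paper's approach. The paper does not give an explicit proof of Theorem~\ref{thm:curve-MR}; it is introduced as ``the following analogue of Theorem~\ref{thm:curve}'' and left to the reader, with Remark~\ref{rem:curveGVMR} acknowledging precisely the gap you identify in the middle piece. Your plan---using Theorem~\ref{thm:lagrange} to parametrize the KKT solutions by $x$, mimicking Lemmas~\ref{lem:rgv-rho} and~\ref{lem:delta-rho} for the identification $\rmr(\delta(x))=\rho_{\rm MR}(x)$ and the monotonicity, and handling $\delta\ge\delta_{\max}$ via the tensor-product eigenvalue identity---is exactly the intended analogy.

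One minor correction: the identity you write as $\vD(x,1)=\vC(x)\otimes\vC(x)$ should, strictly speaking, be stated for the unreduced matrix (the analogue of $\vT_{\cG\times\cG}$), with $\vD_{\cG\times\cG}(x,1)$ merely sharing its dominant eigenvalue; this mirrors how Lemma~\ref{lem:delta-rho} handles $\vB$ versus $\vT$. Your honest treatment of the $\rho_{\rm LB}$ piece as a lower bound rather than a proven equality is exactly in line with the paper's own caveat.
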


\begin{example}
	We continue our example and evaluate the GV-MR bound for the $(3,2)$-SWCC constrained system. In this case, the matrices of interest are
	{
		
		\begin{equation*}
			\vC_{\cG} (z)= 
			\begin{bmatrix}
				z & 1 & 0\\
				0 & 0 & z\\
				z & 0 & 0
			\end{bmatrix}\text{ and }
			\vD_{\cG\times \cG} (x,y)= 
			\begin{bmatrix}
				x^2 & 2xy & 0 & 1 & 0 & 0\\
				0 & 0 & x^2 & 0 & xy & 0\\
				x^2 & xy & 0 & 0 & 0 & 0\\
				0 & 0 & 0 & 0 & 0 & x^2\\
				0 & 0 & x^2 & 0 & 0 & 0\\
				x^2 & 0 & 0 & 0 & 0 & 0 \\
			\end{bmatrix}\,.
		\end{equation*}
	}
	
	Here, we observe that $\Delta(x)$ is a monotone decreasing function
	and so, we set $x^\#=0.01$ and $\delta_{\max}=\lim_{x\to 0}\Delta(x)\approx 0.426$. If we apply Procedure 2 to compute $\rmr(\delta)$ for 100 points in $\interval{0,\delta_{\max}}$, we require $437$ Newton-Raphson iterations and $85500$ power iterations.
	In contrast, we use Theorem~\ref{thm:curve-MR} to  compute $(\delta(x),\rho_{\rm MR}(x))$ for 100 values of $x$ in the interval $\interval{1,x^\#}$.
	This requires $323$ Newton-Raphson iterations and involves $22296$ power iterations.
	The resulting GV-MR curve is given in Fig.~\ref{fig:swcc}(a).
	\hfill \IEEEQEDhere
\end{example}

\begin{remark}\label{rem:curveGVMR}
	Strictly speaking, the GV-MR curve described by~\eqref{eq:MRcurve} may not equal to the curve defined by the optimization problem~\eqref{eq:GV-MR}.
	Nevertheless, the curve provides a lower bound for the optimal asymptotic code rates and we {\em conjecture} that the GV-MR curve described by \eqref{eq:MRcurve} is a lower bound for the curve defined by the optimization problem~\eqref{eq:GV-MR}.
\end{remark}

	

\section{Single-State Graph Presentation }
\label{sec:ss}

In this section, we focus on graph presentations that have exactly one state. 
Here, we allow these single-state graph presentations to contain the parallel edges and 
their labels to be binary strings of length possibly greater than one.
Now, for these constrained systems, the procedures to evaluate the GV bound and its MR improvements can be greatly simplified.
This is because the matrices $\vB_{\cG\times \cG}(y)$, $\vC_{\cG}(z)$, $\vD_{\cG\times \cG}(x,y)$ are all of dimensions one by one. Therefore, determining their respective dominant eigenvalues is straightforward and does not require the power iteration method.
The results in this section follow directly from previous sections and our objective is to provide explicit formulas whenever possible. 

Formally, let $\cS$ be the constrained system with graph presentation $\cG = (\cV,\cE,\cL)$ such that $|V| = 1$ and $\cL: \cE \to \Sigma^{s}$ with $s\ge 1$. We further define $\alpha_t \triangleq \# \{(\vx,\vy) \in \cL(\cE)^2: d_{\vH}(\vx,\vy)=t\}$ for $0\le t\le s$.
Then the corresponding adjacency and reduced distance matrices are as follows:
\[ \vA_{\cG} = \begin{bmatrix}|\cE|\end{bmatrix} \text{ and }
	\vB_{\cG\times\cG}(y) = \begin{bmatrix} \sum_{t\ge 0} \alpha_t y^t \end{bmatrix}\,.
\]

Then we compute the capacity using its definition as $\capy(\cS) = (\log|\cE|)/s$.  

To compute $\rateT(\delta)$, we consider the following extension of the optimization problem \eqref{eq:dual} for the case $s\ge 1$. 
\begin{align}
	\rateT(\delta) 
	&= \frac{1}{s}\inf\left\{-\delta s\log y + \log \lambda(y;\vB) : 0\le y\le 1  \right\} \notag\\
	&= \frac{1}{s}\inf\left\{-\delta s\log y + \log\left(\sum _{t\ge 0}\alpha_ty^{t}\right) : 0\le y\le 1  \right\}. \label{eq:ssdual}
\end{align} 

As before, following the convexity of the objective function in \eqref{eq:ssdual}, we have that the optimal $y$ is the zero (in the interval $[0,1]$) of the function

\begin{equation}{\label{eq:ssderivative}}
	F(y) \triangleq \sum_{t \ge 0} (t-\delta s)\alpha_ty^t.
\end{equation}   

So, for fixed values of $\delta$, we can use the Newton-Raphson procedure to compute the root $y$ of \eqref{eq:ssderivative}, and hence, evaluate $\rgv(\delta)$. Note that the power iteration method is not required in this case.

On the other hand, to plot the GV curve, we have the following corollary of Theorem~\ref{thm:curve}.

\begin{corollary}\label{cor:sscurve}
	Let $\cG$ be the single-state graph presentation for a constrained system $\cS$.
	Then the corresponding GV curve is given by
	\begin{equation}\label{eq:GV-SECC} 
		\gvcurve(\cS) \triangleq \big\{(\delta, \rgv(\delta)) : \delta\in [0,1] \big\} 
		= \big\{(\delta(y), \rho(y)) : y\in [0,1] \big\} 
		\cup \big\{ (\delta, 0) : \delta\ge \delta_{\rm max} \big\}\,, 
	\end{equation}
	where
	\begin{align*}
		\delta_{\rm max} & = \frac{\sum_{t\ge 0}t\alpha_t }{s|\cE|^2}\,,\\
		\delta(y) &= \frac{\sum_{t \geq 0} t\alpha_ty^t}{s\left(\sum_{t \geq 0} \alpha_ty^t\right)}\,,\\
		\rho(y) & = \frac{1}{s}\left(\log\frac{|\cE|^2}{\sum_{t\ge 0}\alpha_t y^t} - \frac{\sum_{t \geq 0} t\alpha_ty^t}{\sum_{t \geq 0} \alpha_ty^t } \log y \right)\,. 
	\end{align*}
\end{corollary}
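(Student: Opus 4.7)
The plan is to derive Corollary~\ref{cor:sscurve} as a direct specialization of Theorem~\ref{thm:curve} to the single-state case, taking care to account for the fact that each edge now reads $s \ge 1$ symbols rather than exactly one. All three formulas in the corollary will come out of straightforward substitution into the identities $\delta(y) = y\Lambda'(y;\vB)/\Lambda(y;\vB)$ and $\rho(y) = 2\capy(\cS) + \delta(y)\log y - \log\Lambda(y;\vB)$ of Theorem~\ref{thm:curve}, once the correct one-dimensional expressions are in place.

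First, I would identify the relevant scalar quantities. Since $|\cV|=1$, the product graph has only one state and its reduced distance matrix $\vB_{\cG\times\cG}(y)$ is a $1\times 1$ matrix whose single entry enumerates ordered pairs $(\vx,\vy)\in\cL(\cE)^2$ weighted by $y^{d_H(\vx,\vy)}$; that is, $\Lambda(y;\vB)=\sum_{t\ge 0}\alpha_t y^t$, with derivative $\Lambda'(y;\vB)=\sum_{t\ge 1}t\alpha_t y^{t-1}$. Two easy observations that will drive the final simplifications are (i) $\sum_{t\ge 0}\alpha_t=|\cL(\cE)|^2=|\cE|^2$, since $\{\alpha_t\}$ partitions all $|\cE|^2$ ordered label pairs by their Hamming distance, and (ii) $\capy(\cS)=(\log|\cE|)/s$, obtained directly from the definition of capacity since each length-$n$ path reads $n$ symbols using $n/s$ edges, each freely chosen from $|\cE|$ possibilities.

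Next I would substitute these expressions into the statements of Theorem~\ref{thm:curve}, adjusting the normalization for $s\ge 1$ as follows. The dual problem \eqref{eq:ssdual} generalizes \eqref{eq:dual} by a prefactor $1/s$ and by replacing the coefficient $\delta$ with $\delta s$ inside the infimum; this tracks the fact that the edge-label distance $t$ measures $s$ symbols of Hamming mismatch. Setting the derivative inside the infimum to zero reproduces \eqref{eq:ssderivative} and immediately gives the stated formula for $\delta(y)$, with $\delta_{\max}=\delta(1)$ collapsing to $\sum_t t\alpha_t/(s|\cE|^2)$ via observation (i). Feeding $\delta(y)$ and the scalar $\Lambda(y;\vB)$ back into $\rho(y)=2\capy(\cS)-\rateT(\delta(y))$ and using observation (ii) then yields the stated expression for $\rho(y)$, and hence the explicit curve \eqref{eq:GV-SECC}.

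I do not expect any genuine obstacle; the argument is essentially routine algebra once the $1\times 1$ matrix and the $1/s$ normalization are correctly identified. The only subtlety worth flagging is checking the endpoints: at $y=0$ one uses $\lim_{y\to 0^+}y\log y = 0$ (so that $\delta(0)\log 0$ is interpreted as $0$) together with $\alpha_0=|\cE|$ (valid because the graph is deterministic and hence has distinct edge labels at a single state) to recover $\rho(0)=\capy(\cS)$; at $y=1$, $\log y=0$ and $\sum_t\alpha_t=|\cE|^2$ yield $\rho(1)=0$, consistent with $\delta_{\max}$. Monotonicity of $\delta(y)$ and $\rho(y)$ and the claim $\rgv(\delta)=0$ for $\delta\ge\delta_{\max}$ are inherited wholesale from Lemma~\ref{lem:delta-rho}, so no further work is needed.
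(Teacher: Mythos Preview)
Your proposal is correct and matches the paper's approach exactly: the paper presents Corollary~\ref{cor:sscurve} as an immediate specialization of Theorem~\ref{thm:curve} to the single-state case (with the $1/s$ normalization from \eqref{eq:ssdual}) and gives no separate proof. Your identification of $\Lambda(y;\vB)=\sum_{t\ge 0}\alpha_t y^t$, $\sum_t\alpha_t=|\cE|^2$, and $\capy(\cS)=(\log|\cE|)/s$, followed by direct substitution, is precisely the intended derivation.
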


\vspace{1mm}

We illustrate this evaluation procedure via an example of the class of {\em subblock energy constrained codes (SECC)}. 
Formally, we fix a {\em subblock length} $L$ and {\em energy constraint} $w$. 
A binary word $\vx$ of length $mL$ said to satisfies the $(L,w)$-{\em subblock energy constraint} if 
we partition $\vx$ into $m$ subblocks of length $L$, then the number of ones in every subblock is at least $w$. 
We refer to the collection of words that meet this constraint as an $(L,w)$-{\em SECC constrained system}.
The class of SECCs was introduced by Tandon \etal{} for the
application of simultaneous energy and information transfer \cite{Tandon2016}. 
Later, in \cite{Tandon2018}, a GV-type bound was introduced (see \cite[Proposition~12]{Tandon2018} and also,~\eqref{eq:LB}) and we make comparisons with the GV bound~\eqref{eq:GV-SECC} in the following example.

\begin{example}\label{exa:gv-secc}
	Let $L=3$ and $w=2$ and we consider a $(3,2)$-SECC constrained system.
	It is straightforward to observe that the graph presentation is as follows with the single state {\tt x}.
	Here, $s=L=3$.
	
	\begin{center}
		\begin{tikzpicture}[scale=.8]
			\tikzstyle{every node} = [draw,shape = circle]
			\node (x) at (0, 0) {{\tt x}};
			\draw [->] (x) to [out=0, in=45, loop]  node[pos =0.25, right, draw = none] {{\tt 011}} (x);
			\draw [->] (x) to [out=90, in=135, loop]  node[pos =0.45, left, draw = none] {{\tt 101}} (x);
			\draw [->] (x) to [out=180, in=225, loop]  node[pos =0.45, left, draw = none] {{\tt 110}} (x);
			\draw [->] (x) to [out=270, in=315, loop]  node[pos =0.45, right, draw = none] {{\tt 111}} (x);	
		\end{tikzpicture}
	\end{center}
	
	Then the corresponding adjacency and reduced distance matrices are as follows:
	\[
	\vA_{\cG}  = 
	\begin{bmatrix}
		4
	\end{bmatrix},\,
	\vB_{\cG\times \cG} (y) = 
	\begin{bmatrix}
		4+6y+6y^2
	\end{bmatrix}\,.
	\]
	First we determine the GV bound at $\delta = 1/3$. Observe that $F(y)=-4+6y^2$ and so, the optimal point $y$ for \eqref{eq:ssdual} is $\sqrt{2/3}$ (the unique solution to $F(y)$ in the interval $[0,1]$).
	Hence, we have that $\rateT(1/3)\approx 1.327$. 
	On the hand, the capacity of $(3,2)$-SECC constrained system is $\capy(\cS) = 2/3$.
	Therefore, the GV bound is given by $\rgv(1/3)=0.006$. 
	
	In contrast, the GV-type lower bound given by \cite[Proposition~12]{Tandon2018} is zero.
	Hence, the evaluation of the GV bound yields a significantly better lower bound.
	
	To plot the GV curve, we observe that $\delta_{\rm max} =  3/8$ and thus, the curve is given by
	\begin{equation*} 
		\gvcurve(\cS) 
		= \left\{\left(\frac{y+2y^2}{2+3y+3y^2}, \frac{1}{3}\log \frac{8}{2+3y+3y^2}+\frac{3y+6y^2}{2+3y+3y^2}\log y \right) : y\in [0,1] \right\} 
		\cup \left\{ (\delta, 0) : \delta\ge \frac38 \right\}\,.
	\end{equation*}
	
	We plot the curve in Section~\ref{sec:numerical}.
	\hfill \IEEEQEDhere
\end{example}

\vspace{2mm}
Next, we evaluate the GV-MR bound. To this end, we consider some proper subset $\cP\subset \cE$ and define
\begin{align*}
\alpha_t &\triangleq \#\{(\vx,\vy) \in \cL(\cE)^2:d_{\vH}(\vx,\vy)=t,~\vx,\vy\in \cP \},\\
\beta_t &\triangleq \#\{(\vx,\vy) \in \cL(\cE): d_{\vH}(\vx,\vy)=t,~(\vx\in\cP,\vy\notin\cP) \text{ or } ~(\vx\notin\cP,\vy\in\cP)\},\\
\gamma_t &\triangleq \#\{(\vx,\vy) \in \cL(\cE): d_{\vH}(\vx,\vy)=t,~\vx,\vy\notin\cP \}.
\end{align*}

Then we consider the following matrices:
\[ \vC_{\cG}(z) = \begin{bmatrix}|\cE|-|\cP|+|\cP|z\end{bmatrix} \text{ and }
\vD_{\cG\times\cG}(x,y) = \begin{bmatrix} \sum _{t\ge 0}(\alpha_tx^2 + \beta_t x + \gamma_t)y^{t} \end{bmatrix}\,.
\]

Setting $p$ to be the normalized frequency of edges in $\cP$, we obtain $S(p)$ by solving the optimization problem \eqref{eq:Sp}.

Specifically, we have that 
\begin{equation}\label{eq:ss-Sp}
	S(p)=\frac1s\left(H(p)+ p +\log |\cP| + (1-p) \log (|\cE|-|\cP|)\right)\,,
\end{equation}
and this value is achieved when
\begin{equation}\label{eq:ss-zopt}
	z=\frac{p(|\cE|-|\cP|)}{(1-p)|\cP|}\,.
\end{equation}

To compute $\rateT(p,\delta)$, we consider the following extension of the optimization problem \eqref{eq:Tp} for the case $s\ge 1$. 

\begin{align}
	\rateT(p,\delta) 
	&= \frac{1}{s}\inf\left\{-2p \log x -\delta s\log y + \log \lambda(y;\vD) : 0\le y\le 1  \right\} \notag\\
	&= \frac{1}{s}\inf\left\{-2p \log x -\delta s\log y + \log\left(\sum _{t\ge 0}(\alpha_t x^2 + \beta_t x + \gamma_t )y^{t}\right) : 0\le y\le 1  \right\}. \label{eq:ssmrdual}
\end{align} 

As before, following the convexity of the objective function in \eqref{eq:ssmrdual}, we have that the optimal $x$ and $y$ are the zeroes (in the interval $[0,1]$) of the functions

\begin{align}{\label{eq:ssmrderivative}}
	G_2(x,y) \triangleq & 2(1-p)(\sum_{t \geq 0}{\alpha_ty^t})x^2+(1-2p)(\sum_{t \geq 0}{\beta_ty^t})x -2p(\sum_{t \geq 0}{\gamma_ty^t}) \notag \\
	G_3(x,y) \triangleq & \sum_{t \ge 0}(t-\delta s)(\alpha_tx^2 + \beta_t x +\gamma_t)y^t	
\end{align}   

So, for fixed values of $p$ and $\delta$, we can use the Newton-Raphson procedure to compute the roots $x$ and $y$ of \eqref{eq:ssmrderivative}, and hence, evaluate $\rgv(p,\delta)$. Note that the power iteration method is not required in this case. We find $x^\#$ as defined in Section~\ref{sec:mr} and set

\begin{equation}\label{eq:GV-MR-SS}
	\rho_{\rm MR}(x) \triangleq 2\log (|\cE|-|\cP|+|\cP|x) + \delta(x) \log y(x) - \log \sum _{t\ge 0}(\alpha_tx^2 + \beta_t x + \gamma_t)y(x)^{t}\,.
\end{equation}

Furthermore if $y(x^\#) < 1$, we set 

\begin{equation}\label{eq:GV-LB-SS}
\rho_{\rm LB}(y) \triangleq 2\log (|\cE|-|\cP|+|\cP|x^\#) + \delta(y) \log y - \log \sum _{t\ge 0}(\alpha_t(x^\#)^2 + \beta_t x^\# + \gamma_t)y^{t}\,.	
\end{equation}

Next, to plot the GV-MR curve, we have the following corollary of Theorem~\ref{thm:curve-MR}.

\begin{corollary}\label{cor:sscurve-MR}
	Let $\cG$ be the single-state graph presentation for a constrained system $\cS$.
For $x \in \interval{1,x^\#}$, we set
\begin{align*}
	p(x) &= \frac{|\cP|x}{(|\cE| - |\cP|) + |\cP|x)},\\
	\delta(x) &=  \frac{\sum _{t\ge 1}t(\alpha_t x^2 + \beta_t x + \gamma_t )y(x)^{t}}{s\sum _{t\ge 0}(\alpha_t x^2 + \beta_t x + \gamma_t )y(x)^{t}},
\end{align*}

where $y(x)$ is the smallest root of the equation

\begin{align*}
	2(|\cE|-|\cP|)(\sum_{t \geq 0}{\alpha_ty^t})x+(|\cE|-|\cP| - |\cP|x)(\sum_{t \geq 0}{\beta_ty^t}) -2|\cP|(\sum_{t \geq 0}{\gamma_ty^t}) &= 0.
\end{align*}

If $y(x^\#) < 1$, then for $y \in \interval{y(x^\#),1}$, we set
\begin{align*}
	\delta(y) &=  \frac{\sum _{t\ge 1}t(\alpha_t (x^\#)^2 + \beta_t x^\# + \gamma_t )y^{t}}{s\sum _{t\ge 0}(\alpha_t (x^\#)^2 + \beta_t x^\# + \gamma_t )y^{t}},
\end{align*}
 
Then the corresponding GV-MR curve is given by 
	\begin{equation}
		\big\{(\delta(x), \rho_{\rm MR}(x)) : x\in \interval{1,x^\#} \big\}
		\cup \{(\delta(y),\rho_{\rm LB}(y)) : y\in \interval{y(x^\#),1}\}
		\cup \big\{ (\delta, 0) : \delta\ge \delta_{\rm max} \big\}\,. 
	\end{equation}
where $\rho_{\rm MR}$ and $\rho_{\rm LB}$ are defined in \eqref{eq:GV-MR-SS} and \eqref{eq:GV-LB-SS} respectively.

\end{corollary}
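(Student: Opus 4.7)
The plan is to specialize Theorem~\ref{thm:curve-MR} to the single-state setting. The crucial simplification is that when $|\cV|=1$, both $\vC_{\cG}(z)$ and $\vD_{\cG\times\cG}(x,y)$ are $1\times 1$, so their dominant eigenvalues coincide with their scalar entries. In particular, $\Lambda(z;\vC)=(|\cE|-|\cP|)+|\cP|z$ with $\Lambda'(z;\vC)=|\cP|$, and $\Lambda(x,y;\vD)=\sum_{t\ge 0}(\alpha_t x^2+\beta_t x+\gamma_t)y^t$, with $\Lambda_x$ and $\Lambda_y$ obtained by direct differentiation. All quantities appearing in Theorem~\ref{thm:curve-MR} therefore become explicit rational expressions in $x,y$, and the proof is essentially a substitution.

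First I would derive the formula for $p(x)$: by Theorem~\ref{thm:lagrange} the optimum obeys $x=z$, so substituting the eigenvalue formulas into the expression $p(x)=x\Lambda'(x;\vC)/\Lambda(x;\vC)$ of Theorem~\ref{thm:curve-MR} gives $p(x)=|\cP|x/\bigl((|\cE|-|\cP|)+|\cP|x\bigr)$ at once. Next, I would write out $G_2(x,y)=x\Lambda_x(x,y;\vD)-2p(x)\Lambda(x,y;\vD)=0$, substitute the preceding expression for $p(x)$, clear the denominator $(|\cE|-|\cP|)+|\cP|x$, and cancel a common factor of $x$; collecting terms by the polynomials $\sum_t\alpha_t y^t$, $\sum_t\beta_t y^t$, $\sum_t\gamma_t y^t$ recovers exactly the polynomial identity in the statement, defining $y(x)$ as one of its roots. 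The expressions for $\delta(x)$ and the boundary $\delta(y)$ then follow by substituting $\Lambda_y$ into $\delta=y\Lambda_y/\Lambda$ from Theorem~\ref{thm:curve-MR}, with the factor $1/s$ reflecting the normalization already present in \eqref{eq:ssmrdual}. Finally, $\rho_{\rm MR}$ and $\rho_{\rm LB}$ follow by direct substitution into \eqref{eq:GV-MR} and \eqref{eq:GV-LB}.

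The main subtlety is justifying the phrase \emph{smallest root} in the definition of $y(x)$. Using $\sum_t\alpha_t=|\cP|^2$, $\sum_t\beta_t=2|\cP|(|\cE|-|\cP|)$, and $\sum_t\gamma_t=(|\cE|-|\cP|)^2$, a short calculation shows that $y=1$ is always a root of the defining polynomial, independently of $x$. The genuine optimum $y(x)$, however, arises from the strictly convex inner minimization in \eqref{eq:ssmrdual} (with $x$ fixed); by an analogue of Corollary~\ref{cor:uniquesol}, this minimum is the unique root of $G_3(x,\cdot)=0$ in $[0,1]$, and whenever it lies strictly inside $[0,1)$ it is necessarily smaller than the spurious root at $1$. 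This also identifies the threshold: the minimum reaches $y=1$ precisely when $\delta(x)=\delta_{\max}$, i.e.\ when $x=x^\#$, which in the single-state case is the maximizer of $\sum_t t(\alpha_t x^2+\beta_t x+\gamma_t)/\sum_t(\alpha_t x^2+\beta_t x+\gamma_t)$ as specified by Theorem~\ref{thm:curve-MR}. Putting these pieces together inside the three-piece decomposition of the GV-MR curve in \eqref{eq:MRcurve} yields the claim.
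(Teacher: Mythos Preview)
Your main approach—specializing Theorem~\ref{thm:curve-MR} by replacing $\Lambda(z;\vC)$ and $\Lambda(x,y;\vD)$ with the scalar entries of the $1\times 1$ matrices $\vC_\cG$ and $\vD_{\cG\times\cG}$—is exactly how the paper treats the corollary (it is stated without proof as a direct consequence of Theorem~\ref{thm:curve-MR}), and your algebraic derivations of $p(x)$, the $G_2$-equation for $y(x)$, and $\delta(x)$ are correct.

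Your treatment of the ``smallest root'' subtlety, however, conflates two constraints. You correctly verify that $y=1$ is always a root of the displayed polynomial (a nice observation that the paper does not make explicit), but you then argue that the genuine $y(x)$ is the unique root of $G_3(x,\cdot)=0$ arising from the inner minimization in \eqref{eq:ssmrdual}. In Theorem~\ref{thm:curve-MR}, $y(x)$ is defined as the solution of $G_2(x,y)=0$ with $p=p(x)$ substituted, whereas $G_3(x,y)=0$ is only used afterward to read off $\delta(x)$; these are different equations in $y$, and the convexity/uniqueness argument for $G_3$ does not by itself select among the roots of $G_2$. The paper does not justify the phrase ``smallest root'' either, so this is not a defect relative to the paper; if you want to close the gap, a cleaner route is to argue by continuity of the branch $y(x)$ along $x\in\interval{1,x^\#}$, anchored at $x=1$ where the correct root is identifiable (cf.\ Example~\ref{exa:gvmr-secc}, where one finds $y(x)=(x-1)/(2x)<1$ explicitly).
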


\begin{example}\label{exa:gvmr-secc}
	We continue our example and evaluate the GV-MR bound for the $(3,2)$-SECC constrained system. We have the following single state graph presentation.
	
	\begin{center}
		\footnotesize
		\begin{tikzpicture}[scale=1.2, transform shape]
			\tikzstyle{every node} = [draw,shape = circle]
			\node (A) at (0, 0) {{\tt A}};
			\draw [->] (A) to [out=0, in=30, loop]  node[pos =0.2, right, draw = none] {{\tt 011}} (A);
			\draw [->] (A) to [out=120, in=150, loop]  node[pos =0.2, left, draw = none] {{\tt 101}} (A);
			\draw [->] (A) to [out=240, in=270, loop]  node[pos =0.2, left, draw = none] {{\tt 110}} (A);
		\end{tikzpicture} 
	\end{center}
	
	Then the matrices of interest are:
	\[
	\vC_{\cG}  = 
	\begin{bmatrix}
		1 + 3z
	\end{bmatrix},\,
	\vD_{\cG\times \cG} (x,y) = 
	\begin{bmatrix}
		(3+6y^2)x^2 + 6xy + 1
	\end{bmatrix}\,.
	\]
	
	Since $\vC_{\cG}$ and $\vD_{\cG\times \cG} (x,y)$ are both singleton matrices, we have $\Lambda(z;\vC) = 1+3z$ and $\Lambda(x,y;\vD) = (3+6y^2)x^2 + 6xy + 1$. Then, $G_1(z) = -p(1+3z) + 3z$, $G_2(x,y) = 3(1+2y^2)x^2(1-p) + 3xy(1-2p) - p$ and $G_3(x,y) = 4x^2y^2 -3\delta(1+2y^2)x^2 + 2xy(1-3\delta) - \delta$. Now we apply Theorem ~\ref{thm:lagrange} and express $p,y,\delta$ in terms of $x$ where $x \in [1,x^\#]$ where $x^\# \to \infty$. 
	\begin{align*}
		p &= \frac{3x}{(1+3x)} \\
		y &= \frac{x-1}{2x} \\
		\delta &= \frac{2x(x-1)}{(9x^2-1)} 
	\end{align*}   
	
	Now observe that we have $y(x^\#) = 1/2$. Since we can still increase $y$ to $1$, we apply the GV-bound with $p = 1$ and $x = z = x^\#$ once we reach at the boundary that is $p = 1$. Hence at the boundary, we solve the following problem.
	\begin{align*}
		S(1) &= 2\log 3 \\
		\rateT(1,\delta) &= \inf\big\{-2\log x -3\delta \log y +\log (3(1 + 2y^2)x^2 + 6xy + 1): 1/2\le y\le 1; x = x^\# \to \infty  \big\} \\ &= \inf\big\{-3\delta \log y +\log 3 + \log (1+2y^2): 1/2\le y\le 1 \big\}\\
		\rmr(\delta) &= S(1) - \rateT(1,\delta).
	\end{align*}  		
	
	By setting $F(y) = -3\delta(1+2y^2) + 4y^2 = 0$, we get $\delta = 4y^2/3(1+2y^2)$ where $y \in [1/2,1]$ and we plot the respective curve.
	\IEEEQEDhere
\end{example}

\begin{figure}[!t]
	\begin{enumerate}[(a)]
		\item Lower bounds for $R(\delta;\cS)$ where $\cS$ is the class of $(3,2)$-SWCC

		\begin{center}
			\includegraphics[width=7.5cm]{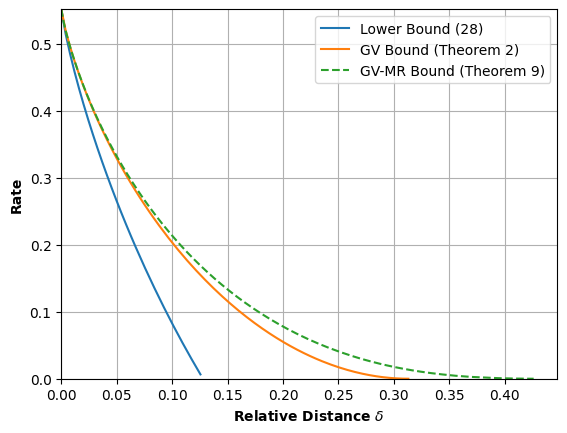}
		\end{center}
		
		\item  Lower bounds for $R(\delta;\cS)$ where $\cS$ is the class of $(10,7)$-SWCC
		
		\begin{center}
			\includegraphics[width=7.5cm]{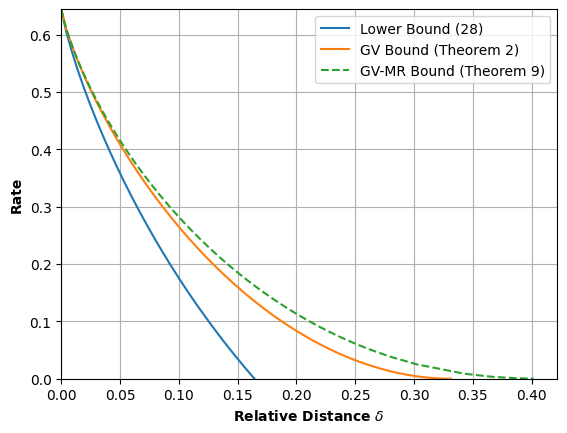}
		\end{center}
	\end{enumerate}
	\caption{Lower bounds for optimal asymptotic code rates $R(\delta;\cS)$ for the class of sliding-window constrained codes
	}
	\label{fig:swcc}
\end{figure}

\section{Numerical Plots}\label{sec:numerical}

In this section, we apply our numerical procedures to
compute the GV and the GV-MR bounds for some specific constrained systems. In particular, we consider the $(L,w)$-SWCC constrained systems defined in Section~\ref{sec:gv}, the ubiquitous $(d,k)$-runlength-limited systems (see for example~\cite[p3]{MRS2001}) and the $(L,w)$-subblock energy constrained codes recently introduced in~\cite{Tandon2016}.
In addition to the GV and GV-MR curves, we also plot a simple lower bound. 
For each $\delta \in \interval{0, 1/2}$, any ball size is at most $2\HH(\delta n)$. So, for any constrained system $\cS$, we have that $\widetilde{T}(\delta) \le \capy (\cS) + \HH(\delta)$. Therefore, we have that
\begin{equation}\label{eq:LB}
	R(\delta; \cS) \le \capy(\cS) - \HH(\delta)\,.	
\end{equation}
From the plots in Figures~\ref{fig:swcc},~\ref{fig:rll} and~\ref{fig:secc}, it is also clear that the computations of~\eqref{eq:gvcurve} and~\eqref{eq:MRcurve} yield a significantly better lower bound.

\subsection{$(L,w)$-Sliding Window Constrained Codes}

Fix $L$ and $w$. Recall from Section~\ref{sec:gv}, a binary word satisfies the $(L,w)$-{sliding window weight-constraint} if the number of ones in every consecutive $L$ bits is at least $w$ and the $(L,w)$-{SWCC constrained system} refers to the collection of words that meet this constraint.
From~\cite{Immink2020,Immink.2020}, we have a simple graph presentation that uses only $\binom{L}{w}$ states. 
To validate our methods, we choose $(L,w)\in\{(3,2), (10,7)\}$ and the corresponding graph presentations have $3$ and $120$ states, respectively.
Applying the plotting procedures described in Theorems~\ref{thm:curve} and~\ref{thm:curve-MR}, we obtain Figure~\ref{fig:swcc}.

\subsection{$(d,k)$-Runlength Limited Codes}

Next, we revisit the ubiquitous runlength constraint.
Fix $d$ and $k$. We say that a binary word satisfies the $(d,k)$-{\em RLL constraint} if each run of zeroes in the word has a length of at least $d$ and at most $k$ . Here, we allow the first and last runs of zeroes are allowed to have length less than $d$ . We refer to the collection of words that meet this constraint as a $(d,k)$-{\em RLL constrained system}. It is well known that $(d,k)$-{\em RLL constrained system} has the graph presentation with $k+1$ states (see for example~\cite{MRS2001}). 
Here, we choose $(d,k)\in\{(1,3), (3,7)\}$ to validate our methods and apply Theorems~\ref{thm:curve} and~\ref{thm:curve-MR} to obtain Figure~\ref{fig:rll}.
For $(d,k)=(3,7)$, we corroborate our results with that derived in~\cite{Winick1996}. Specifically, Winick and Yang determined the GV bound~\eqref{eq:GV} for the $(3,7)$-RLL constraint and remarked that the ``evaluation of the (GV-MR) bound required considerable computation'' for ``a small improvement''.
In the following table, we verify this statement. 
 

	\begin{center}
		\begin{tabular}{ |l|l|l| }
			\hline
		$\delta$	& GV-MR bound~\eqref{eq:GV-MR}& GV Bound~\cite{Winick1996} (see also, \eqref{eq:GV})\\
			\hline
			0 &	0.406 & 0.406 \\
			0.05 & 0.255 & 0.225 \\
			0.1 & 0.163 & 0.163 \\
			0.15 & 0.095 & 0.094 \\
			0.2	& 0.048 & 0.044 \\
			0.25 & 0.018 & 0.012  \\
			\hline
		\end{tabular}	
	\end{center}

\begin{figure}[!t]
	\begin{enumerate}[(a)]
		\item Lower bounds for $R(\delta;\cS)$ where $\cS$ is the class of $(1,3)$-RLL
		
		\begin{center}
			\includegraphics[width=7.5cm]{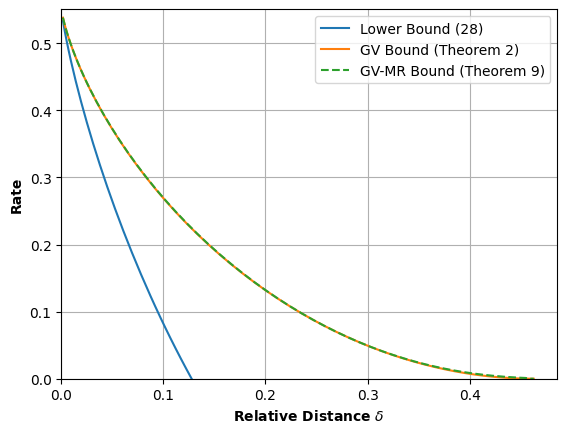}
		\end{center}
		
		\item  Lower bounds for $R(\delta;\cS)$ where $\cS$ is the class of $(3,7)$-RLL
		
		\begin{center}
			\includegraphics[width=7.5cm]{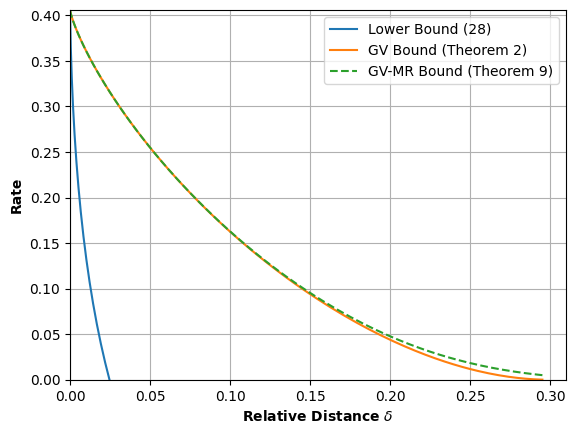}
		\end{center}                                                                                              
	\end{enumerate}
	\caption{Lower bounds for optimal asymptotic code rates $R(\delta;\cS)$ for the class of runlength limited  codes.
	}
	\label{fig:rll}
\end{figure}

\subsection{$(L,w)$-Subblock Energy Constrained Codes} 

Fix $L$ and $w$. 
Recall from Section~\ref{sec:ss}, a binary word satisfies the $(L,w)$-{subblock energy constraint} if each subblock of length $L$ have weight at least $w$
and the $(L,w)$-{SECC constrained system} refers to the collection of words that meet this constraint.
Then the corresponding graph presentation has a single state ${\tt x}$ with $\sum_{i=0}^{w}{L \choose i}$ edges, 
where each edge is labelled by a word of length $L$ and weight at least $w$. We apply the methods in Section~\ref{sec:ss} to determine the GV and GV-MR bounds.



For the GV bound, we provide the explicit formula for $\alpha_t$ and proceed as in Example~\ref{exa:gv-secc}. 
\begin{equation}
	\alpha_t =  {L \choose t}(|\cE| - \sum_{j=1}^{t}\sum_{k=0}^{\lceil \frac{j}{2} \rceil-1} {L-t \choose w-j+k} {t \choose k})
\end{equation}

Similarly, for GV-MR bound, we provide the explicit formula for $\alpha_t$, $\beta_t$ and $\gamma_t$ and proceed as in Example~\ref{exa:gvmr-secc}. 


\begin{align}
	\alpha_t &= {L \choose w}{L-w \choose i/2}{w \choose i/2} \;\text{if $t$ is even, otherwise $\alpha_t = 0$.} \\
	\beta_t &= 2{L \choose w}\sum_{j=1}^{\lfloor \frac{t}{2} \rfloor}{L-w \choose t-j}{w \choose j}-2\alpha_t \\
	\gamma_t &= {L \choose t}(|\cE| - \sum_{j=1}^{t}\sum_{k=0}^{\lceil \frac{j}{2} \rceil-1} {L-t \choose w-j+k} {t \choose k}) - \alpha_t - \beta_t
\end{align}

In Figure~\ref{fig:secc}, we plot the GV bound and GV-MR bounds. We remark that the simple lower bound~\eqref{eq:LB} corresponds to \cite[Proposition~12]{Tandon2018}.


\begin{figure}[!t]
	\begin{center}
	\includegraphics[width=7.5cm]{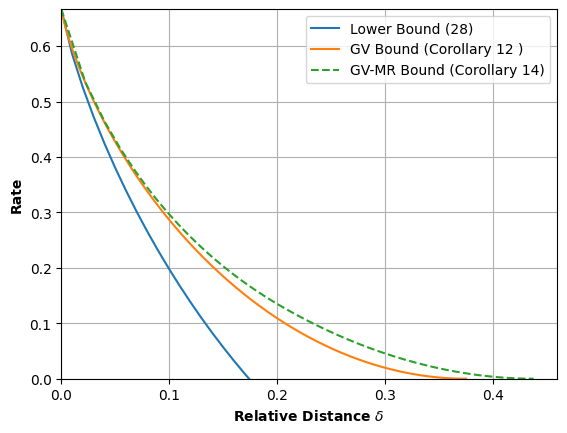}
	\end{center}
	\caption{Lower bounds for optimal asymptotic code rates $R(\delta;\cS)$ where $\cS$ is the class of $(3,2)$-SECC (subblock energy constrained codes).
	}
	\label{fig:secc}
\end{figure}

\newpage

\newpage

\appendices

\section{Power Iteration Method for Derivatives of Dominant Eigenvalues}
\label{app:power}
Throughout this appendix, we assume that $\vA$ is a diagonalizable matrix with dominant eigenvalue $\lambda_1$
and whose corresponding eigenspace has dimension one. Let $\ve_1$ be the unit eigenvector whose entries are positive in this space. 
Then the power iteration method is a well-known numerical procedure that finds the dominant eigenvalue $\lambda_1$ and the corresponding eigenvector $\ve_1$ efficiently. 

Now, in the preceding sections, the entries in the matrix $\vA$ are given functions in either one or two variables and thus, the dominant eigenvalue $\lambda_1$ is a function in the same variables.
Moreover, the numerical procedures in these sections require us to compute the higher order (partial) derivatives of this dominant eigenvalue function $\lambda_1$. 
To the best of our knowledge, we are unaware of any algorithms or numerical procedures that estimate the values of these derivatives.
Hence, in this appendix, we modify the power iteration method to compute these estimates.

Formally, let $\vA$ be an irreducible nonnegative diagonalizable square matrix with dominant eigenvalue $\lambda_1$ and corresponding unit eigenvector $\ve_1$. Since $\vA$ is diagonalizable, $\vA$ has $n$ eigenvectors $\ve_1,\ve_2, \ldots, \ve_n$ that form an orthonormal basis for $\RR^n$.
Let $\lambda_1,\lambda_2,\ldots, \lambda_n$ be the corresponding eigenvalues and so, we have that
\begin{equation}\label{eq:eigenvalue}
	\vA \ve_i=\lambda_i\ve_i \;\; \text{for all}\; i = 1,2,...,n.
\end{equation}
Since $\vA$ is irreducible, the dominant eigenspace has dimension one and also, the dominant eigenvalue is real and positive. Therefore, we can assume that $\lambda_1 > |\lambda_2| \ge \cdots \ge |\lambda_n|$.

We first assume that the entries of $\vA$ are functions in the variable $z$. Hence, $\lambda_i$'s and the entries of $\ve_i$'s are functions in $z$ too. Then Power Iteration~I then evaluates both $\lambda_1$ and $\lambda_1'$ for some fixed value of $z$, while Power Iteration~II additionally evaluates the second order derivative $\lambda_1''$.

The case where the entries of $\vA$ are functions in two variables $x$ and $y$ is discussed at the end of the appendix. Here, Power Iteration~III evaluates higher order partial derivatives of $\lambda_1$ for certain fixed values of $x$ and $y$. For ease of exposition, we provide detailed proofs for the correctness of Power Iteration~I and the proofs can be extended for Power Iteration~II and Power Iteration~III.   

We continue our discussion where the entries of $\vA$ are univariate functions in $z$. 
We differentiate each entry of $\vA$ with respect to $z$ to obtain the matrix $\vA'$. 
Also, for all $1\le i \le n$, we differentiate each entry of eigenvectors $\ve_i$ and the eigenvalue $\lambda_i$ to obtain $\ve_i'$ and $\lambda_i'$, respectively.
Specifically, it follows from \eqref{eq:eigenvalue} that
\begin{equation}\label{eq:eigenderivative}
	\vA' \ve_i+\vA \ve'_i=\lambda'_i\ve_i+\lambda_i\ve'_i \;\; \text{for all} \; i = 1,2,\cdots,n.
\end{equation}

Then the following procedure computes both $\lambda_1$ and $\lambda_1'$.

\vspace{2mm}

\noindent{\bf Power Iteration~I}.

\noindent{\sc Input}: 
Irreducible nonnegative diagonalizable matrix $\vA$\\[1mm]
\noindent{\sc Output}: Estimates of $\lambda_1$ and $\lambda_1'$ 
\begin{enumerate}[(1)]
	\item Intialize $\vq^{(0)}$ such that all its entries are strictly positive. 
	\begin{itemize}
		\item Fix some tolerance value $\epsilon$.
		\item While $|\vq^{(k)}-\vq^{(k-1)}|>\epsilon$~:
		\begin{itemize}
			\item Set
			\begin{align*}
				\lambda^{(k)} &= \|\vA \vq^{(k-1)}\|, \\
				\vq^{(k)}     &= \frac{\vA \vq^{(k-1)}}{\lambda^{(k)}}, \\
				\mu^{(k)}     &=\|\vA' \vq^{(k-1)} + \vA \vr^{(k-1)}-\lambda^{(k)}\vr^{(k-1)}\|,	\\
				\vr^{(k)}     &= \frac{\vA \vr^{(k-1)}+\vA' \vq^{(k-1)}-\mu^{(k)}\vq^{(k-1)}}{\lambda^{(k)}}.
			\end{align*}
			\item Increment $k$ by one.
		\end{itemize}
		\item Set $\lambda_1 \gets \lambda^{(k)}$ and $\lambda_1' \gets \mu^{(k)}$.
	\end{itemize}
\end{enumerate}

\begin{theorem}\label{thm:power1}
	If $\vA$ is irreducible nonnegative diagonalizable matrix and $\vq^{(0)}$ has positive components with unit norm, then as $k \to \infty$, we have 
	\[\lambda^{(k)} \to \lambda_1,~\vq^{(k)} \to \ve_1,~\mu^{(k)} \to \lambda_1'\,.\]
	Here, $\vq^{(k)} \to \ve_1$ means that $\left\|\vq^{(k)} - \ve_1\right\|\to 0$ as $k\to\infty$.
\end{theorem}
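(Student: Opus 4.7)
The plan is to prove the three convergences in sequence: first the classical power iteration for $\lambda^{(k)}$ and $\vq^{(k)}$, then a characterization of any fixed point of the derivative iteration, and finally the actual convergence of $\vr^{(k)}$ to that fixed point.

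For the classical part, I would expand $\vq^{(0)} = \sum_i c_i \ve_i$ in the eigenbasis. Since $\vA$ is irreducible and nonnegative, Perron--Frobenius guarantees that $\ve_1$ has strictly positive entries, and together with the strict positivity of $\vq^{(0)}$ this forces $c_1 > 0$. The normalization built into the recursion makes $\vq^{(k)}$ the unit-norm vector proportional to $\vA^k \vq^{(0)} = c_1 \lambda_1^k \ve_1 + \sum_{i\ge 2} c_i \lambda_i^k \ve_i$; the spectral gap $|\lambda_i| < \lambda_1$ for $i \ge 2$ then gives $\vq^{(k)} \to \ve_1$ and, by continuity, $\lambda^{(k)} = \|\vA \vq^{(k-1)}\| \to \|\vA \ve_1\| = \lambda_1$.

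For the second step, assume momentarily that $\vr^{(k)} \to \vr^*$ and $\mu^{(k)} \to \mu^*$. Taking limits in the recursion for $\vr^{(k)}$ yields $(\lambda_1 I - \vA) \vr^* = \vA' \ve_1 - \mu^* \ve_1$. Let $\vu_1$ be the left Perron eigenvector satisfying $\vu_1^T \vA = \lambda_1 \vu_1^T$; left-multiplying by $\vu_1^T$ and using $\vu_1^T \ve_1 > 0$ yields the classical eigenvalue-derivative formula $\mu^* = \vu_1^T \vA' \ve_1 / \vu_1^T \ve_1 = \lambda_1'$. Comparing with the identity $(\lambda_1 I - \vA) \ve_1' = \vA' \ve_1 - \lambda_1' \ve_1$ obtained by differentiating $\vA \ve_1 = \lambda_1 \ve_1$ shows $\vr^* = \ve_1' + c\,\ve_1$ for some scalar $c$. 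Consistency with the norm definition $\mu^* = \|\lambda_1' \ve_1\| = |\lambda_1'|$ requires $\lambda_1' \ge 0$, which holds in all applications in this paper because the entries of $\vA$ are polynomials with nonnegative coefficients, so that $\vA' \ge 0$ entrywise and hence $\lambda_1' \ge 0$ by Perron--Frobenius monotonicity.

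The main technical obstacle is to show that $\vr^{(k)}$ actually converges rather than merely identifying the fixed point. For this, I would decompose $\vr^{(k)} = \sum_i d_i^{(k)} \ve_i$ in the eigenbasis and similarly expand $\vA' \vq^{(k-1)}$, whose coefficients tend to those of $\vA' \ve_1$ geometrically as $\vq^{(k-1)} \to \ve_1$. Projecting the recursion onto $\ve_i$ for $i \ge 2$ yields $d_i^{(k)} = (\lambda_i/\lambda_1)\, d_i^{(k-1)} + f_i + \epsilon_i^{(k)}$, where $|\lambda_i/\lambda_1| < 1$ and $\epsilon_i^{(k)} \to 0$ geometrically; this is a contractive recursion with convergent forcing, so each $d_i^{(k)}$ converges. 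The $i=1$ coefficient is regulated by the subtraction of $\mu^{(k)} \vq^{(k-1)}$, whose scale is calibrated by the norm expression precisely so that the $\ve_1$-component of the residual is cancelled up to vanishing error, keeping $d_1^{(k)}$ bounded and convergent. Once convergence of $\vr^{(k)}$ is established, the fixed-point analysis of the preceding paragraph concludes that $\mu^{(k)} \to \lambda_1'$ and finishes the proof.
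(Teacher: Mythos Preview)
Your overall strategy differs from the paper's. The paper does not argue via fixed points and contraction of the $\vr$-iteration; instead it first derives \emph{closed-form} expressions for $\vq^{(k)}$, $\lambda^{(k)}$, $\vr^{(k)}$, and $\mu^{(k)}$ in terms of the eigendata (Lemma~\ref{lem:power1}), and then bounds $|\mu^{(k)}-\lambda_1'|$ directly by splitting the numerator of \eqref{eq:muk} into three pieces $D_k,E_k,F_k$ and handling the dependence of $F_k$ on $\sum_{j<k}\mu^{(j)}$ by a self-contained recursive inequality (equations \eqref{eq:power3}--\eqref{eq:power5}). Your left-eigenvector identification of $\lambda_1'$ is cleaner than the paper's implicit use of an orthonormal eigenbasis, and your observation that the norm-based definition of $\mu^{(k)}$ forces $\lambda_1'\ge 0$ is a genuine point the paper does not address.

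However, your convergence argument for $\vr^{(k)}$ has a gap. For $i\ge 2$ the coefficient recursion $d_i^{(k)}\approx(\lambda_i/\lambda_1)d_i^{(k-1)}+\cdots$ is indeed contractive, but the forcing term contains $\mu^{(k)}$, which itself depends on $\vr^{(k-1)}$; and for $i=1$ the ratio is $\lambda_1/\lambda^{(k)}\to 1$, so that mode is only \emph{marginally} stable, not contractive. Your claim that the subtraction of $\mu^{(k)}\vq^{(k-1)}$ ``regulates'' $d_1^{(k)}$ is not correct as stated: near the fixed point the $\ve_1$-component is simply preserved, so $d_1^{(k)}$ is a free parameter whose boundedness you have not established independently of $\mu^{(k)}\to\lambda_1'$. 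This makes the argument circular: you need $d_1^{(k)}$ controlled to conclude $\mu^{(k)}\to\lambda_1'$, but controlling $d_1^{(k)}$ requires summability of $|\mu^{(j)}-\lambda_1'|$. The paper breaks this circularity precisely because the closed form \eqref{eq:muk} lets it write an inequality for $\mu^{(k)}$ involving only $\sum_{j<k}\mu^{(j)}$ and known geometric terms, which can then be unrolled. If you want to rescue your route, you would need a simultaneous induction showing, say, $\|\vr^{(k)}\|=O(k)$ and $\mu^{(k)}=O(1)$ together, and then feed the resulting $O(k\epsilon^{k})$ decay back in; as written, that step is missing.
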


Before we present the proof of Theorem~\ref{thm:power1}, we remark that the usual power iteration method computes only $\lambda^{(k)}$ and $\vq^{(k)}$.
Then it is well-known (see for example, \cite{Stewart.1973}) that $\lambda^{(k)}$ and $\vq^{(k)}$ tends to $\lambda_1$ and $\ve_1$, respectively.

Now, since $\ve_i$'s span $\RR^n$, we can write $\vq^{(0)} = \sum_{i=1}^{n}\alpha_i\ve_i$ for any initial vector $\vq^{(0)}$.
The next technical lemma provides closed formulae for $\lambda^{(k)}$, $\vq^{(k)}$, $\mu^{(k)}$ and $\vr^{(k)}$ in terms of $\lambda_i$'s, $\ve_i$'s and $\alpha_i$'s.

\begin{lemma}\label{lem:power1}
	Let $\vq^{(0)} = \sum_{i=1}^{n}\alpha_i\ve_i$. Then,
	{\small 
		\begin{align} 
			\vq^{(k)} &= \frac{\sum_{i=1}^{n}\alpha_i\lambda_i^k\ve_i}{\|\sum_{i=1}^{n}\alpha_i\lambda_i^k\ve_i\|},  \label{eq:qk} \\
			\lambda^{(k)}&= \frac{\|\sum_{i=1}^{n}\alpha_i\lambda_i^k\ve_i\|}{\|\sum_{i=1}^{n}\alpha_i\lambda_i^{k-1}\ve_i\|}, \label{eq:lambk}\\
			\vr^{(k)} &= \frac{\sum_{i=1}^{n}(\alpha_i\ve'_i+\alpha_i'\ve_i)\lambda_i^{k}+(k\lambda_i'-\sum_{j=1}^{k}\mu^{(j)})\alpha_i\lambda_i^{k-1}\ve_i}{\|\sum_{i=1}^{n}\alpha_i\lambda_i^k\ve_i\|},  \label{eq:rk} \\
			\mu^{(k)}   & = \frac{\Bigg\|\sum_{i=1}^{n}(\alpha_i\ve'_i+\alpha_i'\ve_i)\lambda_i^{k-1}(\lambda_i-\lambda^{(k)}) + \alpha_i\lambda_i^{k-1}\lambda'_i\ve_i + ((k-1)\lambda_i'-\sum_{j=1}^{k-1}\mu^{(j)})\alpha_i\lambda_i^{k-2}(\lambda_i-\lambda^{(k)})\ve_i\Bigg\|}{\|\sum_{i=1}^{n}\alpha_i\lambda_i^{k-1}\ve_i\|}.\label{eq:muk}	
	\end{align}}
\end{lemma}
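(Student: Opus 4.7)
The plan is to prove the four identities simultaneously by induction on $k$, with base case the initialization $\vq^{(0)}=\sum_i\alpha_i\ve_i$ and $\vr^{(0)}=\vzero$. The latter is consistent because $\vq^{(0)}$ is chosen independently of $z$, which forces $\sum_i(\alpha_i'\ve_i+\alpha_i\ve_i')=\vzero$, and this is precisely what the right-hand side of \eqref{eq:rk} evaluates to at $k=0$ (with the convention that the empty sum $\sum_{j=1}^{0}\mu^{(j)}$ vanishes and kills the $\lambda_i^{-1}$ term). The identities \eqref{eq:qk} and \eqref{eq:lambk} for $\vq^{(k)}$ and $\lambda^{(k)}$ are the classical power-iteration formulas: substituting the inductive form of $\vq^{(k-1)}$ into $\vA\vq^{(k-1)}$ via $\vA\ve_i=\lambda_i\ve_i$ gives $\vA\vq^{(k-1)}=\sum_i\alpha_i\lambda_i^k\ve_i/\bigl\|\sum_i\alpha_i\lambda_i^{k-1}\ve_i\bigr\|$, and then normalizing via $\lambda^{(k)}=\|\vA\vq^{(k-1)}\|$ produces the claimed formulas at once.

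For the derivative identities \eqref{eq:rk} and \eqref{eq:muk}, I would first differentiate the eigen-relation \eqref{eq:eigenvalue} to obtain $\vA\ve_i'=\lambda_i\ve_i'+\lambda_i'\ve_i-\vA'\ve_i$. Applying $\vA$ to the inductive expression for $\vr^{(k-1)}$ then generates three contributions in each mode~$i$: a $\lambda_i\ve_i'$ term that lifts the exponent of $\lambda_i$ in the $\ve_i'$-component from $k-1$ to $k$; a $\lambda_i'\ve_i$ term that increments the coefficient of $\ve_i$ by $+\alpha_i\lambda_i'\lambda_i^{k-1}$; and a $-\vA'\ve_i$ term that cancels exactly against the expansion of $\vA'\vq^{(k-1)}$ obtained from \eqref{eq:qk}. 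After this cancellation, $\vA\vr^{(k-1)}+\vA'\vq^{(k-1)}$ is expressed purely in the basis $\{\ve_i,\ve_i'\}$ with a $k\lambda_i'-\sum_{j=1}^{k-1}\mu^{(j)}$ coefficient on $\ve_i$.

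From this common expression, formula \eqref{eq:muk} follows by subtracting $\lambda^{(k)}\vr^{(k-1)}$ and taking the norm, while formula \eqref{eq:rk} follows by subtracting $\mu^{(k)}\vq^{(k-1)}$ and dividing by $\lambda^{(k)}$; here one rewrites $\lambda^{(k)}\cdot\bigl\|\sum_i\alpha_i\lambda_i^{k-1}\ve_i\bigr\|=\bigl\|\sum_i\alpha_i\lambda_i^k\ve_i\bigr\|$ using \eqref{eq:lambk}, and the subtracted $\mu^{(k)}$ extends the accumulated sum from $\sum_{j=1}^{k-1}\mu^{(j)}$ to $\sum_{j=1}^{k}\mu^{(j)}$ to match the stated form.

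The main obstacle I anticipate is the bookkeeping of the telescoping accumulation $\sum_{j=1}^{k}\mu^{(j)}$ appearing in the $\ve_i$-component of $\vr^{(k)}$: each new normalization by $\lambda^{(k)}$ introduces exactly one additional $\mu^{(k)}$ into this running sum, while the exponent of $\lambda_i$ drops by one, and I need to verify that the coefficient of $\alpha_i\lambda_i^{k-1}\ve_i$ evolves cleanly from $(k-1)\lambda_i'-\sum_{j=1}^{k-1}\mu^{(j)}$ at step $k-1$ into $k\lambda_i'-\sum_{j=1}^{k}\mu^{(j)}$ at step $k$. The structural observation that makes this work is the aforementioned cancellation of $\vA'\ve_i$, which is what decouples the iteration algebraically and allows the induction to close.
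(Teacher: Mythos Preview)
Your inductive approach is correct and rests on the same key identity the paper uses---differentiating $\vA\ve_i=\lambda_i\ve_i$ to get $\vA\ve_i'=\lambda_i\ve_i'+\lambda_i'\ve_i-\vA'\ve_i$, which yields exactly the cancellation of $\vA'$-terms you highlight. The only cosmetic difference is that the paper first unrolls the recursion for $\vr^{(k)}$ into the closed form $\bigl(\vA^{k}\vr^{(0)}+\sum_{j=0}^{k-1}\vA^j\vA'\vA^{k-j-1}\vq^{(0)}-(\sum_{j=1}^{k}\mu^{(j)})\vA^{k-1}\vq^{(0)}\bigr)/\|\vA^k\vq^{(0)}\|$ (with $\vr^{(0)}=\sum_i(\alpha_i\ve_i'+\alpha_i'\ve_i)$, the derivative of $\vq^{(0)}$) and then telescopes the middle sum, whereas you carry the induction step by step; the algebra and the bookkeeping of the accumulated $\sum_j\mu^{(j)}$ are identical.
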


\begin{proof}
	Since $\vq^{(k)}$ is defined recursively as $\vq^{(k)} = \frac{\vA \vq^{(k-1)}}{\lambda^{(k)}} = \frac{\vA \vq^{(k-1)}}{\|\vA \vq^{(k-1)}\|}$, we have that 
	{\begin{equation*}
			\vq^{(k)} = \frac{\vA^k\vq^{(0)}}{\|\vA^k\vq^{(0)}\|}. 
	\end{equation*}  }
	Then it follows from Eq.~\eqref{eq:eigenvalue} that,
	{
		\begin{equation}
			\vA^k\vq^{(0)} = \vA^k\sum_{i=1}^{n}\alpha_i\ve_i   = \sum_{i=1}^{n}\alpha_i(\vA^k\ve_i)  = \sum_{i=1}^{n}\alpha_i\lambda_i^k\ve_i,
	\end{equation} }
	and so, we obtain \eqref{eq:qk}.
	Similarly, from \eqref{eq:eigenvalue}, we have that
	\begin{equation*}
		\lambda^{(k)} = \|\vA \vq^{(k-1)}\| 
		= \frac{\|\vA^k\vq^{(0)}\|}{\|\vA^{k-1}\vq^{(0)}\|} \\
		= \frac{\|\sum_{i=1}^{n}\alpha_i\lambda_i^k\ve_i\|}{\|\sum_{i=1}^{n}\alpha_i\lambda_i^{k-1}\ve_i\|},
	\end{equation*}
	as required for \eqref{eq:lambk}.
	
	Next, note that $\vr^{(0)} = \sum_{i=1}^{n}\alpha_i\ve'_i+\sum_{i=1}^{n}\alpha_i'\ve_i$.
	Then using the recursive definition of $\vr^{(k)}$, we have
	{	\begin{equation}
			\vr^{(k)} = \frac{\vA^{k}\vr^{(0)}+\sum_{j=0}^{k-1}\vA^j\vA'\vA^{k-j-1}\vq^{(0)}-(\sum_{j=1}^{k}\mu^{(j)})\vA^{k-1}\vq^{(0)}}{\|\vA^k\vq^{(0)}\|}.
	\end{equation} }
	Then from \eqref{eq:eigenvalue}, we have 
	{	\begin{equation}
			\vA^k\vr^{(0)} = \vA^k\left(\sum_{i=1}^{n}\alpha_i\ve'_i + \sum_{i=1}^{n}\alpha_i'\ve_i\right) =  \sum_{i=1}^{n}\alpha_i(\vA^k\ve'_i) + \sum_{i=1}^{n}\alpha_i'\lambda_i^k\ve_i.  
	\end{equation} }
	
	and from \eqref{eq:eigenderivative},
	\[
	\vA'\sum_{i=1}^{n}\alpha_i\lambda_i^{k-j-1}\ve_i = \sum_{i=1}^{n}\alpha_i\lambda_i^{k-j-1}(\vA'\ve_i) = \sum_{i=1}^{n}\alpha_i\lambda_i^{k-j-1}(\lambda'_i\ve_i+\lambda_i\ve_i'-\vA \ve_i'). 		
	\]
	
	Therefore, using \eqref{eq:eigenvalue} again,
	{\begin{align*}
			\sum_{j=0}^{k-1}\vA^j\vA'\sum_{i=1}^{n}\alpha_i\lambda_i^{k-j-1}e_i  &= \sum_{j=0}^{k-1}\vA^j\sum_{i=1}^{n}\alpha_i\lambda_i^{k-j-1}(\lambda'_ie_i+\lambda_ie_i'-\vA e_i') \\
			&= k\sum_{i=1}^{n}\alpha_i\lambda_i^{k-1}\lambda_i'e_i+\sum_{i=1}^{n}\alpha_i\lambda_i^{k}e'_i -\sum_{i=1}^{n}\alpha_i(\vA^ke'_i). 	
	\end{align*} }
	Therefore, we obtain \eqref{eq:rk}.
	
	Finally, recall that $\mu^{(k)}$ is defined as
	\begin{equation*}
		\mu^{(k)} =\|\vA' \vq^{(k-1)} + \vA \vr^{(k-1)} - \lambda^{(k)}\vr^{(k-1)}\|.
	\end{equation*}
	Then by replacing $\vr^{(k-1)}$ and $\vq^{(k-1)}$ from  \eqref{eq:rk} and \eqref{eq:qk}, respectively, and then using equation \eqref{eq:eigenderivative}, we obtain \eqref{eq:muk}.
\end{proof}

Finally, we are ready to demonstrate the correctness of Power Iteration~I.

\begin{proof}[Proof of Theorem~\ref{thm:power1}]
	Since $\vA$ is irreducible nonnegative diagonalizable matrix, $\lambda_1$ is real positive and there exists $0 <\epsilon < 1$ such that $\frac{|\lambda_i|}{\lambda_1} < \epsilon \;\; \text{for all} \; i = 2,3,\cdots,n$ (see, for example, \cite{MRS2001}). 
	For purposes of brevity, we write
	\begin{equation}\label{eq:Phik}
		\Phi_k = \sum_{i=1}^{n}\alpha_i\lambda_i^k\ve_i
	\end{equation}
	and so, we can rewrite \eqref{eq:qk} as 
	\[\vq^{(k)} = \frac{\Phi_k}{\|\Phi_k\|} = \frac{\lambda_1^k}{\|\Phi_k\|} \frac{\Phi_k}{\lambda_1^k} =  \frac{\lambda_1^k}{\|\Phi_k\|}\left(\alpha_1\ve_1+\sum_{i=2}^{n}\alpha_i\frac{\lambda_i^k}{\lambda_1^k}\ve_i\right)\,.\]
	
	Now, since ${\lambda_i^k}/{\lambda_1^k}\le \epsilon^k$ for all $i=2,\ldots, n$, we have that:
	\begin{equation}\label{eq:Phik-est}
		\left\|\frac{\Phi_k}{\lambda_1^k}-\alpha_1\ve_1\right\|\le C_1\epsilon^{k} \text{ for some constant } C_1.
	\end{equation}
	
	Then using the triangle inequality, we have that as $k\to \infty$, $\left|\frac{\|\Phi_k\|}{\lambda_1^k}-\alpha_1\right|\to 0$ and thus, 
	$\frac{\lambda_1^k}{\|\Phi_k\|}\to\frac{1}{\alpha_1}$.
	Therefore, $\|\vq^{(k)}-\ve_1\|\to 0$ as required.
	
	Note that since $\frac{\lambda_1^k}{\|\Phi_k\|}$ tends to a finite limit, we have that $\frac{\lambda_1^k}{\|\Phi_k\|}$ bounded above by some constant. In other words, we have that 
	\begin{equation}\label{normPhik}
		\frac{\lambda_1^k}{\|\Phi_k\|} \le C_2  \text{ for some constant } C_2.
	\end{equation}
	
	Next, we show the following inequality:
	\begin{equation}\label{eq:lambk-est}
		|\lambda^{(k)}-\lambda_1 |\le C_3\epsilon^{k-1} \text{ for some constant } C_3.
	\end{equation}
	Using \eqref{eq:lambk}, we have that 
	\[
	\frac{\|\Phi_k-\lambda_1\Phi_{k-1}\|}{\|\Phi_{k-1}\|}
	= \frac{\lambda_1^{k-1}}{\|\Phi_{k-1}\|} \frac{\sum_{i=1}^{n}\alpha_i\lambda_i^k\ve_i - \alpha_i\lambda_1\lambda_i^{k-1}\ve_i}{\lambda_1^{k-1}}
	= \left(\frac{\lambda_1^{k-1}}{\|\Phi_{k-1}\|}\right)\cdot \lambda_1 \cdot \sum_{i=2}^{n}\alpha_i\left(\frac{\lambda_i^k}{\lambda_1^k}-\frac{\lambda_i^{k-1}}{\lambda_1^{k-1}}\right)\ve_i\,.
	\]
	Now, observe that $\left(\frac{\lambda_i^k}{\lambda_1^k}-\frac{\lambda_i^{k-1}}{\lambda_1^{k-1}}\right)\le 2\epsilon^{k-1}$ for $i=2,\ldots, n$. Since $\frac{\lambda_1^{k-1}}{\|\Phi_{k-1}\|}\le C_2$, we have \eqref{eq:lambk-est} after applying the triangle inequality.
	
	Again, to reduce clutter, we introduce the following abbreviations.
	
	\begin{align*}
		D_k &= \sum_{i=1}^{n}(\alpha_i\ve_i'+\alpha_i'\ve_i)\lambda_i^{k-1}(\lambda_i-\lambda^{(k)}),\\
		E_k &= \sum_{i=1}^{n} \alpha_i\lambda_i^{k-1}\lambda'_i\ve_i, \\ 
		F_k &=  \sum_{i=1}^{n}\left((k-1)\lambda_i'-\sum_{j=1}^{k-1}\mu^{(j)}\right)\alpha_i\lambda_i^{k-2}(\lambda_i-\lambda^{(k)})\ve_i.
	\end{align*}
	Thus, we can rewrite \eqref{eq:muk} as 
	\[ \mu^{(k)} = \frac{\|D_k+E_k+F_k\|}{\|\Phi_{k-1}\|} \le \lambda_1' + \frac{\|D_k\|}{\|\Phi_{k-1}\|}+\frac{\|E_k-\lambda_1'\Phi_{k-1}\|}{\|\Phi_{k-1}\|}+\frac{\|F_k\|}{\|\Phi_{k-1}\|}\,.\]
	Next, we bound each of the summands on the right-hand side.
	Specifically, we show the following inequalities:
	\begin{align}
		\frac{\|D_k\|}{\|\Phi_{k-1}\|} + \frac{\|E_k-\lambda_1'\Phi_{k-1}\|}{\|\Phi_{k-1}\|}& \le C_4\epsilon^{k-1}  \text{ for some constant }C_4, \label{eq:Dk-est}\\
		\frac{\|F_k\|}{\|\Phi_{k-1}\|} & \le C_5(k-1)\epsilon^{k-1} + C_5\left(\sum_{j=1}^{k-1}\mu^{(k)} \right)\epsilon^{k-1}\text{ for some constant }C_5 \label{eq:Fk-est}.
	\end{align}
	
	To demonstrate \eqref{eq:Dk-est}, we consider 
	\[
	\frac{\|D_k\|}{\lambda_1^{k-1}} = 
	\left\|\sum_{i=1}^n (\alpha_i\ve_i'+\alpha_i'\ve_i)\frac{\lambda_i^{k-1}}{\lambda_1^{k-1}}(\lambda_i-\lambda^{(k)})\right\|
	\le \|\alpha_1\ve_1'+\alpha_1'\ve_1\||\lambda_1-\lambda^{(k)}| + \epsilon^{k-1} \sum_{i=2}^n\| \alpha_i\ve_i'+\alpha_i'\ve_i\||\lambda_i-\lambda^{(k)}|.
	\]
	We use \eqref{eq:lambk-est} to bound the first summand by some constant multiple of $\epsilon^{k-1}$.
	On the other hand, we have $|\lambda_i-\lambda^{(k)}| \leq |\lambda_i-\lambda_1| + |\lambda_1-\lambda^{(k)}| \le  \max\{|\lambda_i-\lambda_1|: 2 \leq i \leq n\} + C_3\epsilon^{k-1}$ for $2 \leq i \leq n$. In other words, the second summand is also bounded by some constant multiple of $\epsilon^{k-1}$.  
	Next, we consider
	\[
	\frac{\|E_k-\lambda_1'\Phi_{k-1}\|}{\lambda_1^{k-1}} = 
	\left\|\sum_{i=1}^n \alpha_i\frac{\lambda_i^{k-1}}{\lambda_1^{k-1}}(\lambda_i'-\lambda_1')\ve_i\right\|
	\le  \epsilon^{k-1} \sum_{i=2}^n |\alpha_i(\lambda'_i-\lambda'_1)|.
	\]
	and so, $\frac{\|E_k-\lambda_1'\Phi_{k-1}\|}{\lambda_1^{k-1}}$ is also bounded by a multiple of $\epsilon^{k-1}$.  
	Therefore, since $\frac{\lambda_1^{k-1}}{\|\Phi_{k-1}\|}\le C_2$, we have \eqref{eq:Dk-est}.
	Using similar methods, we can establish \eqref{eq:Fk-est}.
	
	Next, we apply \eqref{eq:Dk-est} and then recursively apply \eqref{eq:Fk-est} until right hand side is free of $\mu^{(i)}$'s. Then it follows that,
	\begin{equation}\label{eq:power3}
		\footnotesize 
		\mu^{(k)} \le \lambda_1' + C_4\epsilon^{k-1} + C_5(k-1)\epsilon^{k-1}  
		+  \prod_{j=2}^{k-1}(1+C_5\epsilon^{k-j}) + C_5\epsilon^{k-1}\sum_{i=1}^{k-1}(\lambda_1' + C_4\epsilon^{k-i-1}   C_5(k-i-1)\epsilon^{k-i-1})\prod_{j=2}^{i}(1+C_{5}\epsilon^{k-j})). 
	\end{equation}
	
	Furthermore, since $i \le  k-1$, $\prod_{j=2}^{i}(1+C_5\epsilon^{k-j}) \leq \prod_{j=2}^{k-1}(1+C_5\epsilon^{k-j})$, 
	we can rewrite \eqref{eq:power3} as
	
	\begin{equation}\label{eq:power4}
		\small
		\mu^{(k)} \le \lambda_1' + C_4\epsilon^{k-1} + C_5(k-1)\epsilon^{k-1}  
		+  \prod_{j=2}^{k-1}(1+C_5\epsilon^{k-j})\left(1 + C_5\epsilon^{k-1}\sum_{i=1}^{k-1}(\lambda_1' + C_4\epsilon^{k-i-1}   C_5(k-i-1)\epsilon^{k-i-1})\right). 
	\end{equation}
	
	Next it follows from standard calculus that $\prod_{j=2}^{k-1}(1+C_{5}\epsilon^{k-j}) < e^{\frac{C_{5}}{1-\epsilon}}$. 
	Furthermore, since $\epsilon < 1$, we have $\sum_{i=0}^{k-2}\epsilon^j < \frac{1}{1-\epsilon}$ and $\sum_{i=0}^{k-2}j\epsilon^j < \frac{1}{(1-\epsilon)^2}$. Putting everything together, we have
	
	\begin{equation}\label{eq:power5}
		\mu^{(k)} \le  \lambda_1' + C_4\epsilon^{k-1} + C_5(k-1)\epsilon^{k-1} 
		+ C_{5}\epsilon^{k-1}e^{\frac{C_{5}}{1-\epsilon}}\left(1 + (k-1)\lambda_1' + \frac{C_{4}}{1-\epsilon} + \frac{C_{5}}{(1-\epsilon)^{2}}\right).
	\end{equation} 
	
	As $k \to \infty$, since $\epsilon < 1$, we have $\epsilon^{k} \to 0$ and $k\epsilon^{k} \to 0$.
	Therefore, $\lim_{k\to\infty} \mu^{(k)} \le \lambda_1'$. Using similar methods, we have that  $\lim_{k\to\infty} \mu^{(k)} \ge \lambda_1'$ and so, $\lim_{k\to\infty} \mu^{(k)}=\lambda_1'$, as required.
\end{proof}

\vspace{2mm}

Next, we modify Power Iteration~I so as to compute the higher order derivatives.
We omit a detailed proof as it is similar to the proof of Theorem~\ref{thm:power1}. 

\noindent{\bf Power Iteration~II}.

\noindent{\sc Input}: 
Irreducible nonnegative diagonalizable matrix $\vA$\\[1mm]
\noindent{\sc Output}: Estimates of $\lambda_1$, $\lambda_1'$ and $\lambda_1''$ 
\begin{enumerate}[(1)]
	\item Intialize $\vq^{(0)}$ such that all its entries are strictly positive. 
	\begin{itemize}
		\item Fix some tolerance value $\epsilon$.
		\item While $|\vq_{(k)}-\vq_{(k-1)}|>\epsilon$~:
		\begin{itemize}
			\item Set
			\begin{align*}
				\lambda^{(k)} &= \|\vA \vq^{(k-1)}\|, \\
				\vq^{(k)}     &= \frac{\vA \vq^{(k-1)}}{\lambda^{(k)}}, \\
				\mu^{(k)}     &=\|\vA' \vq^{(k-1)} + \vA \vr^{(k-1)}-\lambda^{(k)}\vr^{(k-1)}\|,	\\
				\vr^{(k)}     &= \frac{\vA \vr^{(k-1)}+\vA' \vq^{(k-1)}-\mu^{(k)}\vq^{(k-1)}}{\lambda^{(k)}}, \\
				\nu^{(k)} &= \|\vA''q^{(k-1)} + 2\vA'\vr^{(k-1)}+\vA s^{(k-1)} -\lambda^{(k)} \vs^{(k-1)}-2\mu^{(k)}\vr^{(k-1)}\|,\\
				\vs^{(k)} &=\frac{\vA''q^{(k-1)}+2\vA'\vr^{(k-1)}+\vA \vs^{(k-1)}-2\mu^{(k)}\vr^{(k-1)}-\nu^{(k)}\vq^{(k-1)}}{\lambda^{(k)}}.
			\end{align*}
			\item Increment $k$ by one.
		\end{itemize}
		\item Set $\lambda_1 \gets \lambda^{(k)}$, $\lambda_1' \gets \mu^{(k)}$ and $\lambda_1'' \gets \nu^{(k)}$ .
	\end{itemize}
\end{enumerate}

\begin{theorem}\label{thm:power2}
	If $\vA$ is irreducible nonnegative diagonalizable matrix and $\vq^{(0)}$ has positive components with unit norm, then as $k \to \infty$, we have 
	\[\lambda^{(k)} \to \lambda_1,~\vq^{(k)} \to \ve_1,~\mu^{(k)} \to \lambda_1', ~\nu^{(k)} \to \lambda_1''\,.\]
\end{theorem}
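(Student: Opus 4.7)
The first three convergence statements are exactly the conclusion of Theorem~\ref{thm:power1}, so the entire task reduces to proving $\nu^{(k)}\to \lambda_1''$. The plan is to mirror the structure used for Theorem~\ref{thm:power1}: first establish a closed-form expression for $\vs^{(k)}$ analogous to \eqref{eq:rk}, then derive the resulting expression for $\nu^{(k)}$ analogous to \eqref{eq:muk}, and finally bound the non-dominant contributions by geometric factors in $\epsilon=\max_{i\ge 2}|\lambda_i|/\lambda_1<1$.

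The essential algebraic input is the twice-differentiated eigenvalue equation. Differentiating $\vA\ve_i=\lambda_i\ve_i$ twice in $z$ yields
\begin{equation*}
\vA''\ve_i + 2\vA'\ve_i' + \vA\ve_i'' = \lambda_i''\ve_i + 2\lambda_i'\ve_i' + \lambda_i\ve_i''\quad\text{for } i=1,\ldots,n.
\end{equation*}
Writing $\vq^{(0)}=\sum_i\alpha_i\ve_i$, so that $\vr^{(0)}=\sum_i(\alpha_i\ve_i'+\alpha_i'\ve_i)$ and $\vs^{(0)}=\sum_i(\alpha_i\ve_i''+2\alpha_i'\ve_i'+\alpha_i''\ve_i)$, I would unroll the recursion defining $\vs^{(k)}$ exactly as Lemma~\ref{lem:power1} unrolls the one for $\vr^{(k)}$. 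Using \eqref{eq:qk} and \eqref{eq:rk} together with the twice-differentiated eigenvalue relation above to replace each occurrence of $\vA'\ve_i'$ and $\vA''\ve_i$, each application of $\vA$ inside the recursion produces a clean $\lambda_i$ factor, and the correction terms $-2\mu^{(k)}\vr^{(k-1)}$ and $-\nu^{(k)}\vq^{(k-1)}$ exactly cancel the unbounded components one would otherwise accumulate. The resulting closed form for $\vs^{(k)}$ is a sum over $i$ of coefficients of the types $\lambda_i^k$, $k\lambda_i^{k-1}\lambda_i'$, $k(k-1)\lambda_i^{k-2}(\lambda_i')^2$ and $k\lambda_i^{k-1}\lambda_i''$, each multiplied by combinations of $\alpha_i,\alpha_i',\alpha_i'',\ve_i,\ve_i',\ve_i''$ and corrected by the cumulative sums $\sum_{j\le k}\mu^{(j)}$ and $\sum_{j\le k}\nu^{(j)}$, all divided by $\|\Phi_k\|$ with $\Phi_k$ as in \eqref{eq:Phik}.

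From this closed form I would pull out the $i=1$ contribution, which after dividing by $\|\Phi_{k-1}\|$ contributes exactly $\lambda_1''$ in the limit (this is where the factor of $2$ in front of $\vA'\vr^{(k-1)}$ and the separation of $2\lambda_1'\ve_1'$ from $\lambda_1''\ve_1$ play the role identified in the twice-differentiated eigenvalue equation). The remaining $i\ge 2$ contributions carry at least one factor of $(\lambda_i/\lambda_1)^{k-c}$ for some fixed $c$, and thus decay like $\epsilon^{k-c}$; the polynomial prefactors $k$ and $k(k-1)$ coming from derivatives of $\lambda_i^k$ are absorbed since $k^2\epsilon^k\to 0$. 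The estimate $\lambda_1^k/\|\Phi_k\|\le C_2$ from \eqref{normPhik}, the bound \eqref{eq:lambk-est} for $|\lambda^{(k)}-\lambda_1|$, and the already-proved $\mu^{(k)}\to\lambda_1'$ (with an $O(k\epsilon^k)$ rate readable off \eqref{eq:power5}) let me control every cross term and the cumulative correction sums in the same spirit as inequalities \eqref{eq:Dk-est}--\eqref{eq:Fk-est} in Theorem~\ref{thm:power1}'s proof.

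The main obstacle I anticipate is the bookkeeping for the nested correction terms: $\vs^{(k)}$ is defined by a three-term perturbation of $\vA$ and its recursion inherits $\sum_j \mu^{(j)}\vr^{(k-j)}$-type sums through $\vr^{(k-1)}$, plus its own $\sum_j\nu^{(j)}\vq^{(k-j)}$ sums. To avoid a circular bound on $\nu^{(k)}$, I plan to first establish an a priori bound $\nu^{(k)}\le D_1 + D_2 k^2 \epsilon^k + D_3 \epsilon^k \sum_{j<k}\nu^{(j)}$, then invoke a discrete Gronwall-type inequality (exactly as the product $\prod_{j}(1+C_5\epsilon^{k-j})\le e^{C_5/(1-\epsilon)}$ was used in \eqref{eq:power4}) to conclude $\nu^{(k)}$ is bounded and that $|\nu^{(k)}-\lambda_1''|\to 0$. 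Symmetry of the upper and lower bounds, as used at the end of the proof of Theorem~\ref{thm:power1}, then gives $\nu^{(k)}\to\lambda_1''$, completing the argument.
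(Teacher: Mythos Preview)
Your proposal is correct and matches the paper's approach: the paper itself omits the proof entirely, stating only that it ``is similar to the proof of Theorem~\ref{thm:power1}.'' Your outline is precisely the natural elaboration of that remark---differentiate the eigenvalue equation a second time, unroll the recursion for $\vs^{(k)}$ to get a closed form analogous to Lemma~\ref{lem:power1}, and bound the non-dominant terms using $k^2\epsilon^k\to 0$ and a Gronwall-type argument as in \eqref{eq:power4}---so you are supplying exactly the details the paper leaves to the reader.
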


Finally, we end this appendix with a power iteration method that computes the partial derivatives when the elements of the given matrix are bivariate functions.
\vspace{1mm}

\noindent{\bf Power Iteration~III}.

\noindent{\sc Input}: 
Irreducible nonnegative diagonalizable matrix $\vA$\\[1mm]
\noindent{\sc Output}: Estimates of $\lambda_1$, $(\lambda_1)_x$, $(\lambda_1)_y$, $(\lambda_1)_{xx}$, $(\lambda_1)_{yy}$, $(\lambda_1)_{xy}$  
\begin{enumerate}[(1)]
	\item Intialize $\vq^{(0)}$ such that all its entries are strictly positive. 
	\begin{itemize}
		\item Fix some tolerance value $\epsilon$.
		\item While $|\vq^{(k)}-\vq^{(k-1)}|>\epsilon$~:
		\begin{itemize}
			\item Set
			{\small \begin{align*}
					\lambda^{(k)} & =\|\vA q^{(k-1)}\|,\\
					q^{(k)} &=\frac{\vA q^{(k-1)}}{\lambda^{(k)}},\\
					\lambda^{(k)}_x &= \|\vA_x \vq^{(k-1)}+\vA \vq^{(k-1)}_x-\lambda \vq^{(k-1)}_x\|,\\
					\vq^{(k)}_x&=\frac{\vA_x \vq^{(k-1)}+\vA \vq^{(k-1)}_x-\lambda^{(k-1)}_x \vq^{(k-1)}}{\lambda^{(k)}},\\
					\lambda^{(k)}_y &= \|\vA_y \vq^{(k-1)}+\vA \vq^{(k-1)}_y-\lambda \vq^{(k-1)}_y\|,\\
					\vq^{(k)}_y&=\frac{\vA_y \vq^{(k-1)}+\vA \vq^{(k-1)}_y-\lambda^{(k-1)}_y\vq^{(k-1)}}{\lambda^{(k)}},\\
					\lambda^{(k)}_{xx} &= \|\vA_{xx}\vq^{(k-1)}+2\vA_x\vq^{(k-1)}_x+\vA \vq^{(k-1)}_{xx}  -\lambda^{(k-1)}\vq^{(k-1)}_{xx}-2\lambda^{(k-1)}_{x}\vq^{(k-1)}_{x}\|,\\
					\vq^{(k)}_{xx}&=\frac{\vA_{xx}\vq^{(k-1)}+2\vA_x\vq^{(k-1)}_x + \vA \vq^{(k-1)}_{xx}-2\lambda^{(k-1)}_x\vq^{(k-1)}_x-\lambda^{(k-1)}_{xx}\vq^{(k-1)}}{\lambda^{(k)}} \\
					\lambda^{(k)}_{yy} &= \|\vA_{yy}\vq^{(k-1)}+2\vA_y\vq^{(k-1)}_y+\vA \vq^{(k-1)}_{yy}  -\lambda^{(k-1)}\vq^{(k-1)}_{yy}-2\lambda^{(k-1)}_{y}\vq^{(k-1)}_{y}\|,\\
					\vq^{(k)}_{yy}&=\frac{\vA_{yy}\vq^{(k-1)}+2\vA_y\vq^{(k-1)}_y+\vA \vq^{(k-1)}_{yy}-2\lambda^{(k-1)}_y\vq^{(k-1)}_y-\lambda^{(k-1)}_{yy}\vq^{(k-1)}}{\lambda^{(k)}}  \\
					\lambda^{(k)}_{xy} &= \|\vA_{xy}q^{(k-1)}+\vA_xq^{(k-1)}_y+\vA_y\vq^{(k-1)}_x +\vA \vq^{(k-1)}_{xy} 
					-\lambda^{(k-1)}\vq^{(k-1)}_{xy}-\lambda^{(k-1)}_{x}\vq^{(k-1)}_{y}-\lambda^{(k-1)}_{y}\vq^{(k-1)}_{x}\|,\\
					\vq^{(k)}_{xy} &= \frac{\vA_{xy}\vq^{(k-1)}+\vA_x\vq^{(k-1)}_y+\vA_y\vq^{(k-1)}_x+\vA \vq^{(k-1)}_{xy}-\lambda^{(k-1)}_{xy} \vq^{(k-1)}-\lambda^{(k-1)}_{x}\vq^{(k-1)}_{y}-\lambda^{(k-1)}_{y}\vq^{(k-1)}_{x}}{\lambda^{(k)}}.
			\end{align*} }
			\item Increment $k$ by one.
		\end{itemize}
		\item Set $\lambda^{(k)} \gets \lambda_1$, $\lambda^{(k)}_x \gets (\lambda_1)_x$, $\lambda^{(k)}_y\gets  (\lambda_1)_y$,
		$\lambda^{(k)}_{xx}\gets(\lambda_1)_{xx}$, $\lambda^{(k)}_{yy}\gets (\lambda_1)_{yy}$, $\lambda^{(k)}_{xy}\gets(\lambda_1)_{xy}$. 
	\end{itemize}
\end{enumerate}

\begin{theorem}\label{thm:power3}
	If $\vA$ is irreducible nonnegative diagonalizable matrix and $\vq^{(0)}$ has positive components with unit norm, then as $k \to \infty$, we have 
	$\lambda^{(k)}_{xx}\to(\lambda_1)_{xx}$, $\lambda^{(k)}_{yy}\to (\lambda_1)_{yy}$, $\lambda^{(k)}_{xy}\to(\lambda_1)_{xy}$\,.
\end{theorem}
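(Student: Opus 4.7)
The plan is to follow the template established in the proof of Theorem~\ref{thm:power1} and extend it to the bivariate setting. First I would differentiate the eigenvalue equation $\vA \ve_i = \lambda_i \ve_i$ partially with respect to $x$, $y$, twice in $x$, twice in $y$, and mixed in $xy$, obtaining six analogues of \eqref{eq:eigenderivative}. Writing any initial vector as $\vq^{(0)} = \sum_i \alpha_i \ve_i$, I would then derive closed-form expressions for $\vq^{(k)}$ and for each of the six auxiliary sequences $\vq^{(k)}_x, \vq^{(k)}_y, \vq^{(k)}_{xx}, \vq^{(k)}_{yy}, \vq^{(k)}_{xy}$, in the spirit of Lemma~\ref{lem:power1}. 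Each closed form will accumulate correction terms from previous iterations, analogous to the sum $\sum_{j=1}^{k} \mu^{(j)}$ appearing in \eqref{eq:rk}.

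The convergence argument then proceeds in stages, exploiting $|\lambda_i|/\lambda_1 < \epsilon < 1$ for $i \ge 2$. First, by the argument of Theorem~\ref{thm:power1}, $\lambda^{(k)} \to \lambda_1$ and $\vq^{(k)} \to \ve_1$ at geometric rate $O(\epsilon^k)$. Next I would use these estimates to establish $\lambda^{(k)}_x \to (\lambda_1)_x$ and $\lambda^{(k)}_y \to (\lambda_1)_y$ along with convergence of the corresponding auxiliary vectors; each mirrors the $\mu^{(k)} \to \lambda_1'$ argument of Theorem~\ref{thm:power1}, with bounds of the form \eqref{eq:Dk-est}--\eqref{eq:Fk-est} controlling the decay. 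The pure second derivatives $\lambda^{(k)}_{xx}$ and $\lambda^{(k)}_{yy}$ then converge to $(\lambda_1)_{xx}$ and $(\lambda_1)_{yy}$ by the same reasoning used for $\nu^{(k)}$ in Theorem~\ref{thm:power2}, once the first-order stage is in hand.

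The main obstacle will be the mixed partial $\lambda^{(k)}_{xy}$, whose recursion couples the iterates $\vq^{(k-1)}_x$, $\vq^{(k-1)}_y$ and involves $\vA_{xy}$. Expanding its closed form produces cross terms of the shape $\alpha_i (\lambda_i)_x (\lambda_i)_y \lambda_i^{k-2} \ve_i$ together with residual errors involving the already-controlled differences $\lambda^{(k)}_x - (\lambda_1)_x$ and $\lambda^{(k)}_y - (\lambda_1)_y$. I would bound these by invoking the geometric decay rates previously established and, following the reasoning of \eqref{eq:power3}--\eqref{eq:power5}, argue that the recursive product $\prod_{j}(1+C\epsilon^{k-j})$ remains bounded while any polynomial-in-$k$ prefactor is absorbed by $\epsilon^{k-1}$. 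The delicate point is verifying that the accumulated sums of earlier correction terms (the analogues of $\sum_{j=1}^{k-1} \mu^{(j)}$ for each of the $x$- and $y$-derivatives) do not destroy the geometric decay; this should follow because each correction converges at rate $\epsilon^{j-1}$, so its partial sums grow at most linearly in $k$ and remain dominated by the $\epsilon^{k-1}$ factor in the leading error bound.
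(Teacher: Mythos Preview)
Your proposal is correct and matches the paper's approach: the paper does not give a detailed proof of Theorem~\ref{thm:power3} but simply states that the argument for Power Iteration~I (Theorem~\ref{thm:power1}) ``can be extended for Power Iteration~II and Power Iteration~III.'' Your outline carries out precisely this extension---differentiating the eigenvalue relation, deriving closed forms analogous to Lemma~\ref{lem:power1}, and then establishing convergence in stages via the geometric-decay estimates \eqref{eq:Dk-est}--\eqref{eq:power5}---so it is exactly what the paper has in mind.
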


\end{document}